\pdfoutput=1

\documentclass[12pt]{article}
\usepackage[letterpaper, margin=1.0in]{geometry}
\providecommand{\keywords}[1]{\noindent\textbf{{Keywords:}} #1.}
\newcommand{\captionDesc}[1]{\caption{#1}}
\usepackage[backend=biber, style=numeric, sorting=nyt, maxbibnames=10, maxalphanames=10, minalphanames=10]{biblatex}
\usepackage[T1]{fontenc}
\usepackage[english]{babel}
\usepackage{csquotes}
\usepackage{algorithm2e}
\RestyleAlgo{ruled}

\newcommand{\define}{\stackrel{\mathclap{\mbox{\text{\tiny def}}}}{=}}
\usepackage{mathtools}

\usepackage{amsthm}
\usepackage{thmtools, thm-restate}
\newtheorem{definition}{Definition}[section]

\newtheorem{proposition}[definition]{Proposition}

\newtheorem*{example*}{Example}

\theoremstyle{remark}

\usepackage{graphicx} \graphicspath{{./figures/}}
\usepackage{tikz}
\tikzset{font={\fontsize{10pt}{12}\selectfont}}
\usepackage{pgfplots} \pgfplotsset{compat=newest}
\usepackage{quiver}
\usepackage{placeins}
\usepackage{float}

\usepackage{booktabs,multirow}
\usepackage{booktabs}

\usepackage{url} \PassOptionsToPackage{hyphens}{url} 
\usepackage{hyperref} \PassOptionsToPackage{breaklinks}{hyperref}

\usepackage{xcolor}

\hypersetup{colorlinks, linkcolor={red!50!black}, citecolor={blue!50!black}, urlcolor={blue!80!black}}

\usepackage{enumitem}

\usepackage[capitalise]{cleveref}
\crefname{definition}{Definition}{Definitions}
\crefname{proposition}{Proposition}{Propositions}

\crefname{theorem}{Theorem}{Theorems}
\crefname{corollary}{Corollary}{Corollaries}
\crefname{example}{Example}{Examples}
\crefname{remark}{Remark}{Remarks}

\usepackage[symbols, acronym, nonumberlist, nogroupskip, stylemods={mcols,longbooktabs}, section=subsection, numberedsection]{glossaries-extra} \setabbreviationstyle[acronym]{long-short} \glssetcategoryattribute{symbol}{nohyper}{true} \glssetcategoryattribute{acronym}{nohyper}{true}

\newacronym{ADS}{ADS}{authenticated data structure}
\newacronym{MT}{MT}{Merkle tree}
\newacronym{HMT}{HMT}{Huffman-Merkle Tree}

\glsxtrnewsymbol[description={A tree.}]{tree}{\ensuremath{\tau}}

\newcommand{\dsname}{\gls{HMT}\xspace}

\glsxtrnewsymbol[description={Leaf number.}]{leaf}{\ensuremath{\ell}}

\begin{document}
\title{Authenticated Data Structures for Dynamic Workloads}
\author{Ziheng (Tom) Shangguan$^1$, Aviv Yaish$^{2,3,4}$, Dahlia Malkhi$^5$}
\date{\small{$^1$Brown University, $^2$Yale University, $^3$IC3, $^4$Complexity Science Hub Vienna, $^5$UCSB}}
\maketitle

\begin{abstract}
We introduce the \emph{Huffman-Merkle Tree (HMT)}, an authenticated data structure (ADS) for dynamic workloads where items may differ in access frequencies, and access frequencies can change over time.
An ADS allows proving item membership against a short commitment to a large mutable state, with applications including verifiable storage, Internet transparency services, and blockchains.
Optimizing ADS performance under continuously changing access frequencies has not been fully addressed before, neither in theory nor in practice.

HMT addresses dynamically changing access skew through two complementary mechanisms.
The first is a Huffman-coding-based Merkle-tree layout, with a novel extension to support evolving access frequencies.
The second is an elastic tiering regime that partitions items across separate trees, such as hot and cold tiers, with adaptive migration between them.
The key insight in the combined approach is to place frequently accessed items closer to the root, while assigning less frequently accessed items to progressively larger and deeper trees.
This reduces the overall frequency-weighted access cost.
Additionally, our scheme is designed to scale to gigabytes of data spanning millions of items.
To handle dynamism efficiently, layout updates are applied in batches, access frequencies are tracked using a count-min sketch, and the system employs a tier-promotion cache while exploring multiple tier-migration policies.

We implement HMT and compare it on real-world data with Ethereum's Merkle Patricia Trie (MPT) ADS and its proposed replacement, the Unified Binary Tree (UBT).
Our evaluation considers two metrics: the amount of hashing per ADS update and access-weighted membership-proof size.
The latter metric captures both each item's access cost and frequency.
We find that the best HMT policy uses about $2.4\times$ and $0.34\times$ less average hash operations than MPT and UBT respectively, together with $0.18\times$ and $0.55\times$ shorter proofs.

\keywords{Merkle trees, state commitments, blockchains}
\end{abstract}

\section{Introduction}
Authenticated Data Structures (ADSs) allow committing to a large state using a short digest and proving against the digest that specific elements are included.
They are useful for memory checking \cite{wang2023locality}, remote storage \cite{papamanthou2008authenticated,miller2014authenticated}, internet certificates \cite{bonneau2025merkle}, and blockchains \cite{mendonca2026efficient}.

The Merkle Tree (MT) is a widely used ADS where elements appear as leaves, each internal node stores a hash of the concatenation of its children, and the root hash serves as a compact commitment to the entire structure.
Item inclusion is proven by showing the sibling hashes along the path from the element to the root.
So, inclusion proofs in a balanced binary MT with $N$ leaves contain $log N$ hash values, incurring logarithmic cost regardless of element access frequencies.
This shortcoming extends to other notable ADSs, e.g., Merkle-Patricia Tries (MPTs)~\cite{wood2025ethereum}, which also suffer costs that are independent of access frequencies.

Real-world workloads are typically skewed, with some elements accessed more frequently than others~\cite{guo2019graph}.
Our focus is therefore the \emph{average-case} performance of ADSs under such workloads, where each element's proof and update cost is weighted by its access frequency.

When all access frequencies are known in advance, this optimization problem is well understood: the optimal static tree can be described by a Huffman code \cite{moffat2019huffman}.
However, the dynamic setting is more challenging.
Prior work has addressed changing access distributions by rebuilding the entire data structure whenever the distribution changes~\cite{mizrahi2024traffic}.
However, this incurs overhead which must be accounted for in the same concrete units as proof generation and verification: the number of hash-compression invocations.
To our knowledge, the practical challenge of devising an ADS which efficiently adapts to evolving access frequencies while explicitly accounting for restructuring cost has not been addressed before.

Given these gaps, we ask:
\begin{quote}
    \emph{Can we design efficient authenticated data structures for applications where elements may be updated and access frequencies can change?}
\end{quote}

\begin{figure*}
    \centering
    \begin{center}
    \begin{tabular}{ccc}
    \scalebox{0.7}{
    \begin{tikzpicture}[commutative diagrams/every diagram]
        \node (H1234) at (3.75,3) {$H(H_{1,2}||H_{3,4})$};
        \node (H12)   at (1.25,2) {$H_{1,2} \coloneqq H(H_1||H_2)$};
        \node (H34)   at (6.25,2) {$H_{3,4} \coloneqq H(H_3||H_4)$};
        \node (H1)    at (0,   1) {$H_1 \coloneqq H(m_1)$};
        \node (H2)    at (2.5, 1) {$H_2 \coloneqq H(m_2)$};
        \node (H3)    at (5,   1) {$H_3 \coloneqq H(m_3)$};
        \node (H4)    at (7.5, 1) {$H_4 \coloneqq H(m_4)$};
        \node (m1)    at (0,   0) {$m_1$};
        \node (m2)    at (2.5, 0) {$m_2$};
        \node (m3)    at (5,   0) {$m_3$};
        \node (m4)    at (7.5, 0) {$m_4$};
        \path[commutative diagrams/.cd, every arrow, every label]
        (H12) edge node {} (H1234)
        (H34) edge node {} (H1234)
        (H1)  edge node {} (H12)
        (H2)  edge node {} (H12)
        (H3)  edge node {} (H34)
        (H4)  edge node {} (H34)
        (m1)  edge node {} (H1)
        (m2)  edge node {} (H2)
        (m3)  edge node {} (H3)
        (m4)  edge node {} (H4);
    \end{tikzpicture}
    }
    &&
    \scalebox{0.7}{
    \begin{tikzpicture}[commutative diagrams/every diagram]
        \node (H1234) at (0,3) {$H(H_1||H_{2,3,4})$};
        \node (H1)    at (-1,  2) {$H_1 \coloneqq H(m_1)$};
        \node (m1)    at (-1,  1) {$m_1$};
        \node (H234)  at (2,   2) {$H_{2,3,4} \coloneqq H(H_2||H_{3,4})$};
        \node (H2)    at (1,   1) {$H_2 \coloneqq H(m_2)$};
        \node (m2)    at (1,   0) {$m_2$};
        \node (H34)   at (4,   1) {$H_{3,4} \coloneqq H(H_3||H_4)$};
        \node (H3)    at (2.8, 0) {$H_3 \coloneqq H(m_3)$};
        \node (H4)    at (5.2, 0) {$H_4 \coloneqq H(m_4)$};
        \node (m3)    at (2.8,-1) {$m_3$};
        \node (m4)    at (5.2,-1) {$m_4$};
        \path[commutative diagrams/.cd, every arrow, every label]
        (H1)  edge node {} (H1234)
        (m1)  edge node {} (H1)
        (H234) edge node {} (H1234)
        (H2)  edge node {} (H234)
        (H34)  edge node {} (H234)
        (m2)  edge node {} (H2)
        (H3)  edge node {} (H34)
        (H4)  edge node {} (H34)
        (m3)  edge node {} (H3)
        (m4)  edge node {} (H4);
    \end{tikzpicture}
    }
    \\
    (a) && (b)
    \end{tabular}
    \end{center}
    \captionDesc{Panel (a) depicts a Merkle Tree (MT) and (b) shows a Huffman Merkle Hash Tree (HuffMHT), both with 4 items with monotonically decreasing access frequencies, i.e., $m_i$ is accessed more frequently than $m_{i+1}$.}
    \label{fig:HMT}
\end{figure*}

\subsection{This Work}
We introduce the \dsname, an efficient ADS for dynamic workloads.
HMT separates state into tiers:
frequently-accessed ``hot'' items reside in a small Huffman Merkle Hash Tree (HuffMHT) and a balanced Merkle Tree (MT) stores cold elements.
Our tiered approach is designed to obtain the ``best of both worlds'': HuffMHT obtains good average runtime complexity for operations involving hot items, while the cold tier is flatter and thus more suitable for efficient insertions of new elements given there are many rarely-accessed items.
We furthermore optimize our ADS for dynamic workloads by batching and incrementally updating its layout when access frequencies change.

Our approach builds on two observations.
First, authentication can be separated from layout optimization, that is, value updates are reflected in the next committed root of an item's tier, while layout changes are batched.
Items admitted to the hot tier between rebuilds are authenticated through an overflow MT, so HMT never needs to restructure the hot HuffMHT after every access.
Second, given a monotone initial frequency distribution, a single access requires only swapping two nodes and updating hashes along the affected paths.
The cost of a single swap is proportional to the distance between the relevant pair of nodes, and several swaps can be batched to avoid redundant hashing and thus lower update overhead.

HMT uses lightweight, deterministic metadata to decide which elements should be considered as hot.
A Count-Min Sketch (CMS) efficiently estimates access frequencies without maintaining exact counters for every observed key \cite{cormode2005countmin}.
A bucketed least-frequently-used (LFU) cache tracks candidate elements for promotion and demotion.
On top of this, HMT supports several migration policies:
(1) Ratio-Based uses lifetime access rate and is simple and stable.
(2) Sliding-Window uses recent accesses and is more responsive to workload shifts.
(3) Dynamic-Control adjusts admission thresholds and tier capacity using feedback from occupancy and rejected promotions.
Policy decisions are batched; in our blockchain evaluation, this takes place at the end of the block-building process.

Using real-world data, we compare five ADSs (MPT, UBT, and three HMT configurations) on two metrics.
First, commit-time hash input measures the hashed bytes amount, thus capturing the work induced by updates while avoiding implementation- and hardware-specific differences.
Second, \emph{average} membership-proof length, weighted by the frequency with which each item is accessed.

Our evaluation shows that HMT improves on prominent ADSs, and that Sliding-Window is the best-performing policy: it uses about $2.4\times$ less average hash operations than MPT and about $0.34\times$ less than UBT.
Its access-weighted proofs are also substantially smaller, reducing the plotted proof sizes by $0.18\times$ than MPT and $0.55\times$ than UBT.
Ratio-Based HMT is slightly less responsive but remains stable, while Dynamic-Control HMT demonstrates an adaptive admission mechanism that remains competitive in proof size at the cost of additional hash-input work in this run.
These results indicate that frequency-aware authenticated layouts can improve both witness size and update overhead on real workloads.

\paragraph{Contributions}
To summarize our contributions:
\begin{enumerate}
    \item We propose HMT, a frequency-aware ADS which exploits skewed access distributions to tailor performance to real-world workloads while supporting updates, insertions, deletions, deterministic root computation, and bounded restructuring work.
    HMT combines a cold authenticated tier, a periodically rebuilt HuffMHT hot tier, and an overflow tree for newly hot elements admitted between rebuilds.

    \item We develop efficient mechanisms for adapting HMT under dynamic workloads, including Count-Min Sketch (CMS) frequency estimation, a bucketed least-frequently-used (LFU) promotion cache, lifetime-rate promotion, sliding-window promotion with lazy demotion, and feedback-controlled admission.

    \item We implement HMT and evaluate it on real-world Ethereum data, comparing its performance against Ethereum's current MPT design and its proposed UBT replacement.
    The evaluation shows HMT obtains improved performance with respect to both hashing overhead and access-weighted membership-proof size.
\end{enumerate}

\paragraph{Paper Structure}
In \cref{sec:preliminaries}, we cover preliminaries required for our work.
Next, we describe different approaches for efficiently optimizing ADS layouts when item access frequencies can dynamically change in \cref{sec:update-strategies}.
We proceed by specifying the tiered architecture of our \dsname in \cref{sec:hmt-arch}, and its dynamic tier adjustment policies in \cref{sec:dynamic}.
Then, we evaluate several HMT configurations against prominent ADSs in \cref{sec:evaluation}.
Finally, we go over related work in \cref{sec:related} and conclude in \cref{sec:conclusion}.
In the interest of space, some algorithmic details are relegated to \cref{sec:algorithms} and part of the evaluation to \cref{sec:additional-policy-replay}.

\section{Preliminaries}
\label{sec:preliminaries}
We now cover the foundations of ADSs, their application to blockchain state management, and the principles of coding theory utilized in our proposed architecture.

\subsection{Authenticated Data Structures} \label{ADS_Intro}
An ADS allows a prover to convince a verifier that an element is a member of a dataset represented by a succinct commitment, without requiring the verifier to see the entire dataset. 
ADSs expose an API for reading data, generating membership proofs, updating data, committing to the current state, and verifying membership proofs against the committed state, as formalized in \cref{def:ads-api}.

\begin{definition}[Authenticated Data Structure Interface]
    \label[definition]{def:ads-api}
An Authenticated Data Structure (ADS) over a dataset $D$ of key-value pairs $(k,v)$ exposes the following methods:
\begin{itemize}
    \item $\mathsf{Put}(D,k,v) \rightarrow D'$: inserts or updates initial state $D$ with the key-value pair $(k,v)$, producing an updated state $D'$ such that $(k,v) \in D'$.
    \item $\mathsf{Delete}(D,k) \rightarrow D'$: removes a key-value pair $(k,v)$ from initial state $D$, returning an updated state $D' \leftarrow D \setminus \{(k,v)\}$.
    \item $\mathsf{Commit}(D) \rightarrow c$: returns a commitment $c$ of $D$.
    \item $\mathsf{Get}(D,k) \rightarrow \{v, \bot\}$: returns the value $v$ associated with the key $k$ in state $D$ if it is indeed present.
    Otherwise, $k$ is absent and $\bot$ is returned.
    \item $\mathsf{GetProof}(D,k) \rightarrow \{\pi, \bot\}$: returns a proof $\pi(k, v, c)$ against $c = \mathsf{Commit}(D)$ if key $k$ is present in state $D$.
    Otherwise, returns $\bot$.
    \item $\mathsf{Verify}(c,k, v,\pi) \rightarrow \{0,1\}$: checks whether a proof $\pi$ authenticates key-value pair $(k, v)$ against a commitment $c$, where for $\pi = \mathsf{GetProof}(D,k) $ and given $c = \mathsf{Commit}(D)$, then $Verify(c, k, v, \pi)$ returns 1 iff $(k,v) \in D$.
\end{itemize}
\end{definition}

\subsection{Merkle Trees (MTs)}
A Merkle Tree (MT) is a cryptographic accumulator where elements reside in leaf nodes~\cite{merkle1987digital}.
Each internal node stores the cryptographic hash of the concatenation of its children's hashes, and the tree's root serves as a commitment to the entire dataset. 
We describe MT inclusion-proof generation and verification in \Cref{alg:mt-proof}.
Briefly, an inclusion proof for a leaf $x$ consists of the sibling hash and left/right orientation at each level on the path from $x$ to the root.
Thus, proof size is logarithmic in the number of leaves $N$: $O(\log_2 N)$.
Given a proof, a verifier hashes $x$'s value, iteratively combines the running hash with each sibling in the indicated order, and accepts if the final hash equals the trusted root commitment.

\subsection{Blockchain State}
Ethereum processes state-modifying transactions in batches called blocks, where each blocks contains a commitment to the blockchain's state up to that point~\cite{wood2025ethereum}.
The Merkle Patricia Trie (MPT) ADS is currently used to maintain this commitment, with the Unified Binary Tree (UBT) proposed as a future replacement \cite{buterin2025eip7864}.
We briefly describe both here for the sake of completeness, as we compare our solution with these structures in \cref{sec:evaluation}.

\paragraph{Merkle Patricia Trie (MPT)}
The MPT supports membership and absence proofs, but its witnesses can be large because proofs contain encoded trie nodes, rather than the standard MT's approach of having just one sibling hash per level.
In Ethereum, a tree-of-tries layout is used for account and contract storage~\cite{wood2025ethereum}, with the choice of MPT specifically has been found to make Ethereum susceptible to Denial-of-Service (DoS) attacks \cite{yaish2024speculative}.
At a high level, Ethereum's MPT proof interface converts the queried key into a hexary so-called ``nibble'' path and returns the encoded trie nodes encountered along that path.
\Cref{alg:mpt-proof} illustrates this inclusion-proof procedure, and \cref{fig:mpt} shows an example MPT.

\begin{figure*}
\centering
\begin{minipage}[t]{0.48\textwidth}
\centering
\resizebox{\linewidth}{!}{%
\begin{tikzpicture}[
  x=0.82cm,y=0.82cm,
  >=latex,
  font=\small,
  box/.style={
    draw,
    align=center,
    inner sep=2.5pt,
    minimum height=7mm
  },
  pathlbl/.style={font=\scriptsize, fill=white, inner sep=1pt}
]
\draw (-1.4,0) rectangle (1.4,1);
\draw (-1.4,0.55) -- (1.4,0.55);
\node at (0,0.75) {Root};
\node at (0,0.24) {$H(H_1,H_{23})$};
\draw (-4.4,-2.0) rectangle (4.4,-0.7);
\draw (-4.4,-1.3) -- (4.4,-1.3);

\draw (-3.3,-2.0) -- (-3.3,-1.3);
\draw (-2.2,-2.0) -- (-2.2,-1.3);
\draw (-1.1,-2.0) -- (-1.1,-1.3);
\draw (0.0,-2.0) -- (0.0,-1.3);
\draw (1.1,-2.0) -- (1.1,-1.3);
\draw (2.2,-2.0) -- (2.2,-1.3);
\draw (3.3,-2.0) -- (3.3,-1.3);

\node at (0,-1.02) {Branch Node};
\node at (-3.85,-1.64) {\texttt{0}};
\node at (-2.75,-1.64) {\texttt{1}};
\node at (-1.65,-1.64) {\dots};
\node at (-0.55,-1.64) {\dots};
\node at (0.55,-1.64) {\dots};
\node at (1.65,-1.64) {\texttt{e}};
\node at (2.75,-1.64) {\texttt{f}};
\draw[-latex, thick] (0,-0.7) -- (0,0);
\draw (-5.3,-4.6) rectangle (-2.4,-3.5);
\draw (-5.3,-4.05) --       (-2.4,-4.05);
\node at (-3.85,-3.78) {Leaf Node};
\node at (-3.85,-4.33) {$H_1 := H(m_1)$};

\draw[-latex, thick] (-3.85,-5.25) -- (-3.85,-4.6);
\node[pathlbl] at (-3.85,-5.47) {$m_1$};

\draw[-latex, thick]
  (-3.85,-3.5) .. controls (-3.85,-2.75) and (-3.55,-2.45) .. (-2.75,-2.0);
\node[pathlbl] at (-3.55,-2.9) {\texttt{0x1c2e}};
\draw (2.,-4.6) rectangle (6,-3.5);
\draw (2.,-4.05) --       (6,-4.05);
\node at (4.05,-3.78) {Extension Node};
\node at (4.05,-4.33) {$H_{23} := H(H_2,H_3)$};

\draw[-latex, thick]
  (4.05,-3.5) .. controls (4.05,-2.75) and (3.55,-2.45) .. (2.75,-2.0);
\node[pathlbl] at (3.75,-2.9) {\texttt{0xf}};
\draw (0.6,-6.7) rectangle (7.2,-5.4);
\draw (0.6,-6.0) -- (7.2,-6.0);

\draw (1.7,-6.7) -- (1.7,-6.0);
\draw (2.8,-6.7) -- (2.8,-6.0);
\draw (3.9,-6.7) -- (3.9,-6.0);
\draw (5.0,-6.7) -- (5.0,-6.0);
\draw (6.1,-6.7) -- (6.1,-6.0);

\node at (3.9,-5.72) {Branch Node};
\node at (1.15,-6.34) {\dots};
\node at (2.25,-6.34) {\texttt{3}};
\node at (3.35,-6.34) {\dots};
\node at (4.45,-6.34) {\dots};
\node at (5.55,-6.34) {\texttt{a}};
\node at (6.65,-6.34) {\dots};

\draw[-latex, thick] (4.05,-5.4) -- (4.05,-4.6);
\node[pathlbl] at (4.10,-5.00) {\texttt{0xb}};
\draw (-0.2,-9.4) rectangle (2.7,-8.3);
\draw (-0.2,-8.85) --       (2.7,-8.85);
\node at (1.3,-8.58) {Leaf Node};
\node at (1.3,-9.13) {$H_2 := H(m_2)$};

\draw[-latex, thick] (1.3,-10.05) -- (1.3,-9.4);
\node[pathlbl] at (1.3,-10.27) {$m_2$};

\draw[-latex, thick]
  (1.3,-8.3) .. controls (1.3,-7.55) and (1.7,-7.2) .. (2.25,-6.7);
\node[pathlbl] at (1.70,-7.45) {\texttt{0x35}};
\draw (5.2,-9.4) rectangle (8.2,-8.3);
\draw (5.2,-8.85) --       (8.2,-8.85);
\node at (6.7,-8.58) {Leaf Node};
\node at (6.7,-9.13) {$H_3 := H(m_3)$};

\draw[-latex, thick] (6.7,-10.05) -- (6.7,-9.4);
\node[pathlbl] at (6.7,-10.27) {$m_3$};

\draw[-latex, thick]
  (6.7,-8.3) .. controls (6.7,-7.55) and (6.35,-7.2) .. (5.55,-6.7);
\node[pathlbl] at (6.45,-7.45) {\texttt{0xa9}};

\end{tikzpicture}%
}
\captionDesc{An example of a Merkle Patricia Trie (MPT).}
\label{fig:mpt}
\end{minipage}\hfill
\begin{minipage}[t]{0.48\textwidth}
\centering
\resizebox{\linewidth}{!}{%
\begin{tikzpicture}[
  x=0.82cm,y=1.24cm,
  >=latex,
  font=\small,
  ann/.style={font=\footnotesize},
  pathlbl/.style={font=\scriptsize, fill=white, inner sep=1pt}
]
\draw (-1.55,0) rectangle (1.55,1.02);
\draw (-1.55,0.50) -- (1.55,0.50);
\node at (0,0.77) {Root};
\node at (0,0.23) {$H(H_a,H_b)$};
\draw (-4.9,-1.55) rectangle (-1.3,-0.48);
\draw (-4.9,-1.02) --        (-1.3,-1.02);
\node at (-3.08,-0.75) {Internal Node};
\node at (-3.08,-1.28) {$H_a := H(H_c,H_d)$};

\draw (1.28,-1.55) rectangle (4.88,-0.48);
\draw (1.28,-1.02) --       (4.88,-1.02);
\node at (3.08,-0.75) {Internal Node};
\node at (3.08,-1.28) {$H_b := \cdots$};

\draw[-latex, thick] (-3.08,-0.48) -- (0,0);
\draw[-latex, thick] (3.08,-0.48) -- (0,0);
\node[pathlbl] at (-1.55,-0.18) {\texttt{0}};
\node[pathlbl] at (1.55,-0.18) {\texttt{1}};
\draw (-5.95,-3.35) rectangle (-2.25,-2.28);
\draw (-5.95,-2.82) -- (-2.25,-2.82);
\node at (-4.10,-2.55) {Stem Node};
\node at (-4.10,-3.08) {\texttt{$H_{c} = H(H_{e},H_{f})$}};

\node[ann] at (-2.06,-2.82) {\Large $\cdots$};

\draw[-latex, thick] (-4.10,-2.28) -- (-3.08,-1.55);
\draw[-latex, thick] (-2.06,-2.28) -- (-3.08,-1.55);
\node[pathlbl] at (-3.75,-1.95) {\texttt{0}};
\node[pathlbl] at (-2.45,-1.95) {\texttt{1}};
\node[ann] at        (3.08,-2.82) {\Large $\cdots$};
\draw[-latex, thick] (3.08,-2.65) -- (3.08,-1.55);
\draw (-10.75,-4.88) rectangle (-6.25,-4.00);
\node at (-8.50,-4.44) {$H_e:=H(H_0,H_1)$};

\draw (-1.95,-4.88) rectangle (2.55,-4.00);
\node at (0.30,-4.44) {$H_f:=H(H_{254},H_{255})$};

\node[ann] at (-4.10,-4.65) {\Large $\cdots$};

\draw[-latex, thick] (-8.50,-4.00) -- (-4.10,-3.35);
\draw[-latex, thick] (0.30,-4.00) -- (-4.10,-3.35);
\draw (-12.10,-7.55) rectangle (-8.90,-6.45);
\draw (-12.10,-7.00) -- (-8.90,-7.00);
\node at (-10.50,-6.72) {Value Node};
\node[font=\footnotesize] at (-10.50,-7.27) {$H_0:=H(m_0)$};

\draw (-8.10,-7.55) rectangle (-4.90,-6.45);
\draw (-8.10,-7.00) -- (-4.90,-7.00);
\node at (-6.50,-6.72) {Value Node};
\node[font=\footnotesize] at (-6.50,-7.27) {$H_1:=H(m_1)$};

\draw (-3.30,-7.55) rectangle (-0.10,-6.45);
\draw (-3.30,-7.00) -- (-0.10,-7.00);
\node at (-1.70,-6.72) {Value Node};
\node[font=\footnotesize] at (-1.70,-7.27) {$H_{254}:=H(m_{254})$};

\draw (0.70,-7.55) rectangle (3.90,-6.45);
\draw (0.70,-7.00) -- (3.90,-7.00);
\node at (2.30,-6.72) {Value Node};
\node[font=\footnotesize] at (2.30,-7.27) {$H_{255}:=H(m_{255})$};

\node[ann] at (-4.10,-7.05) {\Large $\cdots$};

\draw[-latex, thick] (-10.50,-6.45) -- (-8.50,-4.88);
\draw[-latex, thick] (-6.50,-6.45) -- (-8.50,-4.88);
\draw[-latex, thick] (-1.70,-6.45) -- (0.30,-4.88);
\draw[-latex, thick] (2.30,-6.45) -- (0.30,-4.88);

\node at (-10.50,-8.35) {$m_0$};
\node at (-6.50,-8.35) {$m_1$};
\node at (-1.70,-8.35) {$m_{254}$};
\node at (2.30,-8.35) {$m_{255}$};
\draw[-latex, thick] (-10.50,-8.13) -- (-10.50,-7.55);
\draw[-latex, thick] (-6.50,-8.13) -- (-6.50,-7.55);
\draw[-latex, thick] (-1.70,-8.13) -- (-1.70,-7.55);
\draw[-latex, thick] (2.30,-8.13) -- (2.30,-7.55);

\end{tikzpicture}%
}
\captionDesc{An example of Unified Binary Tree (UBT).}
\label{fig:ubt}
\end{minipage}
\end{figure*}

\paragraph{Unified Binary Tree (UBT)}
EIP-7864 is an Ethereum Improvement Proposal to replace the current MPT-based state ADS with one based on a Unified Binary Tree (UBT)~\cite{buterin2025eip7864}.
It merges account data, storage, and code into one 32-byte key-value space, groups 256 related values by stem, and uses binary internal nodes to improve ordinary Merkle-witness size and proving-system compatibility~\cite{buterin2025eip7864}.
A UBT proof follows the binary path determined by the first 248 bits of the key's 31-byte stem and then authenticates the selected value inside the 256-value stem subtree, rather than traversing Ethereum's current account/storage tree-of-tries.
\Cref{fig:ubt} shows the corresponding stem-and-subtree structure used by UBT.

\subsection{Frequency-Based Structures} \label{sec:frequency-based-structures}
\paragraph{Huffman Coding}
Huffman coding is a foundational entropy encoding algorithm used for lossless data compression~\cite{huffman1952method,moffat2019huffman}.
Given a distribution $P$ over symbols, it constructs a variable-length prefix code represented by a binary tree in which higher-probability symbols receive shorter root-to-leaf paths.
Huffman coding minimizes the expected prefix-code length $\sum_x p(x)L(x)$.
Note that individual codeword lengths need not equal $\log_2 \frac{1}{p(x)}$.

\paragraph{Huffman Merkle Hash Tree (HuffMHT)}
The HuffMHT uses a frequency-aware layout to reduce access-weighted authentication path length for static access distributions~\cite{munoz2005huffmht}.
It starts from a Huffman tree over element access weights, stores elements at the leaves, stores child-hash commitments at internal nodes, and uses the root hash as the commitment to the set of values.
The authentication interface is the same as in a binary Merkle tree, where an inclusion proof for element $x$ consists of the sibling hashes along the root-to-leaf path, and the verifier recomputes the root from the claimed value and the proof.
The proof depth for $x$ is the Huffman code length $L(x)$, so the same expected-length objective $\sum_x p(x)L(x)$ becomes the expected proof depth.

\Cref{fig:HMT} contrasts a balanced MT with a Huffman layout for four elements whose access frequencies decrease from $m_1$ to $m_4$.
Real-world data often exhibits Zipfian skew, e.g., word frequencies in natural text and city population sizes~\cite{newman2005power}.
\Cref{fig:hmt-mt-proof-ratio} reports the same effect across larger skewed workloads, with the $i$-th element accessed with probability proportional to $(i+b)^{-a}$, where $b=1$ and $a$ varies, where larger values of $a$ correspond to more skewed access distributions.
However, the basic HuffMHT layout is static, so the novelty of HMT is not the use of a Huffman-shaped authenticated tree itself, but the dynamic authenticated-map design that maintains such a layout under insertions, updates, and changing access frequencies.
\begin{figure*}
    \centering
    \begin{minipage}[t]{0.48\textwidth}
        \centering
        \resizebox{\linewidth}{!}{%
        \begin{tikzpicture}
            \begin{axis}[
                scaled ticks=true,
                scaled y ticks=base 10:1,
                xmode=log,
                xmin=4,
                xmax=500000,
                xlabel={Elements},
                ylabel={Weighted Average Proof Ratio},
                xmajorgrids,
                ymajorgrids,
                ymin=0,
                ymax=1,
                legend cell align={left},
                legend style={draw=none, fill=white, fill opacity=0.9, text opacity=1, font=\small, at={(0.97,0.97)}, anchor=north east},
            ]
                \addplot [draw=red, dashed, thick, mark=none] table [col sep=comma,x=n,y=zipf_a_1_5] {figures/hmt-mt-proof-ratio.csv};
                \addlegendentry{$a=1.5$}
                \addplot [draw=teal!70!black, dashdotted, thick, mark=none] table [col sep=comma,x=n,y=zipf_a_2_0] {figures/hmt-mt-proof-ratio.csv};
                \addlegendentry{$a=2$}
                \addplot [draw=orange!80!black, densely dashed, thick, mark=none] table [col sep=comma,x=n,y=zipf_a_2_5] {figures/hmt-mt-proof-ratio.csv};
                \addlegendentry{$a=2.5$}
                \addplot [draw=purple, dotted, thick, mark=none] table [col sep=comma,x=n,y=zipf_a_3_0] {figures/hmt-mt-proof-ratio.csv};
                \addlegendentry{$a=3$}
                \addplot [draw=brown!70!black, loosely dashed, thick, mark=none] table [col sep=comma,x=n,y=zipf_a_3_5] {figures/hmt-mt-proof-ratio.csv};
                \addlegendentry{$a=3.5$}
                \addplot [draw=black, densely dashdotted, thick, mark=none] table [col sep=comma,x=n,y=zipf_a_4_0] {figures/hmt-mt-proof-ratio.csv};
                \addlegendentry{$a=4$}
            \end{axis}
        \end{tikzpicture}%
        }
        \captionDesc{Weighted average authentication-path ratio of HuffMHTs to balanced binary MTs as the number of elements increases under Zipf-distributed accesses, with $p_i \propto (i+b)^{-a}$, $b=1$, and varying $a$.}
        \label{fig:hmt-mt-proof-ratio}
    \end{minipage}\hfill
    \begin{minipage}[t]{0.48\textwidth}
        \centering
        \resizebox{\linewidth}{!}{%
        \begin{tikzpicture}
            \begin{axis}[
                xmode=log,
                xlabel={Elements},
                ylabel={Worst-Case Depth Ratio (HuffMHT/MT)},
                xmajorgrids,
                ymajorgrids,
                domain=4:1000000,
                samples=100,
            ]
            \addplot [draw=blue] {x/(ln(x)/ln(2))};
            \end{axis}
        \end{tikzpicture}%
        }
        \captionDesc{Worst-case depth ratio for a cold element in a HuffMHT vs. a balanced binary MT as the number of elements increases, illustrating the possible long-tail cost for low-frequency elements in HuffMHT.}
        \label{fig:hmt-mt-insert-ratio}
    \end{minipage}
\end{figure*}

\begin{restatable}[Zipf Distribution Weighted Average Tree Depth Ratios]{proposition}{resZipfAverageDepthRatio}
\label{res:zipf-average-depth-ratio}
Fix constants $a>1$ and $b\ge0$.
For $N\ge2$, let $P_N$ be the truncated Zipf distribution over $N$ elements, where
$
p_i=\frac{(i+b)^{-a}}{Z_N}$
and
$
Z_N=\sum_{j=1}^{N}(j+b)^{-a}.
$
Let $\bar d_{\mathrm{Huff}}(N)$ be the access-weighted average leaf depth of a HuffMHT built for $P_N$.
Let $q\ge2$ be any fixed arity and let $\bar d_{\mathrm{MT_q}}(N)=\lceil\log_q N\rceil$ be the depth of a balanced $q$-ary MT.
Then
$
\frac{\bar d_{\mathrm{Huff}}(N)}{\bar d_{\mathrm{MT_q}}(N)}
=O\left(\frac{1}{\log N}\right).
$
Moreover, for any MPT baseline whose access-weighted proof depth $\bar d_{\mathrm{MPT}}(N)$ satisfies $\bar d_{\mathrm{MPT}}(N)=\Omega(\log N)$ over the considered key-space family,
$
\frac{\bar d_{\mathrm{Huff}}(N)}{\bar d_{\mathrm{MPT}}(N)}
=O\left(\frac{1}{\log N}\right).
$
\end{restatable}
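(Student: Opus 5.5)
The plan is to show that $\bar d_{\mathrm{Huff}}(N)$ is bounded by a constant independent of $N$. Since $\bar d_{\mathrm{MT_q}}(N)=\lceil\log_q N\rceil=\Theta(\log N)$ for fixed $q$, both claimed ratios then follow at once. The natural tool is the optimality of Huffman coding recalled in \cref{sec:frequency-based-structures}: the Huffman tree minimizes $\sum_i p_i L(i)$ over all binary prefix codes. It therefore suffices to exhibit one explicit prefix code whose expected length under $P_N$ is $O(1)$ uniformly in $N$. Equivalently, one could use the Shannon bound $\bar d_{\mathrm{Huff}}(N)\le H(P_N)+1$ and show that the entropy $H(P_N)$ stays bounded.

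I would take the entropy route with the Shannon code lengths $L(i)=\lceil\log_2(1/p_i)\rceil$, which satisfy Kraft's inequality. This gives $\bar d_{\mathrm{Huff}}(N)\le \sum_i p_i\log_2(1/p_i)+1$. Substituting $p_i=(i+b)^{-a}/Z_N$, we get $\log_2(1/p_i)=a\log_2(i+b)+\log_2 Z_N$. Hence
\[
H(P_N)=\log_2 Z_N+\frac{a}{Z_N}\sum_{i=1}^{N}(i+b)^{-a}\log_2(i+b).
\]
Three bounds are needed. First, $Z_N\le \zeta_b:=\sum_{j\ge1}(j+b)^{-a}<\infty$, because $a>1$. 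Second, $Z_N\ge Z_2\ge(1+b)^{-a}>0$, so $Z_N$ is bounded below by a positive constant. Third, $\sum_{i\ge1}(i+b)^{-a}\log_2(i+b)$ converges, again because $a>1$: compare it with $\int x^{-a}\log x\,dx$, or note that $(i+b)^{-a}\log(i+b)\le C(i+b)^{-(1+a)/2}$. Together these give $H(P_N)\le C_{a,b}$ for all $N$, so $\bar d_{\mathrm{Huff}}(N)\le C_{a,b}+1$.

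The main point needing care is that every constant must be uniform in $N$. This is exactly where $a>1$ is used, since for $a\le1$ the entropy grows with $N$. A secondary detail is the boundary case $b=0$, where the $i=1$ term has $\log(1+b)=0$; this is harmless. I would also state explicitly that the Huffman tree is binary while the baseline has arity $q$. Only the order of growth $\Theta(\log N)$ of the denominator matters, and $\lceil\log_q N\rceil\ge \log_2 N/\log_2 q$ provides it.

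For the MPT part, the hypothesis $\bar d_{\mathrm{MPT}}(N)=\Omega(\log N)$ gives $\bar d_{\mathrm{MPT}}(N)\ge c\log N$ for large $N$. Dividing the constant bound on $\bar d_{\mathrm{Huff}}(N)$ by this yields $O(1/\log N)$. Small $N$ are absorbed into the constant, since $N\ge2$ keeps the denominators positive.
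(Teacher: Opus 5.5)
Your proposal is correct and follows the paper's proof essentially step for step. Both arguments bound $\bar d_{\mathrm{Huff}}(N)$ by $H_2(P_N)+1$ and expand the entropy as $\log_2 Z_N+\frac{a}{Z_N}\sum_{i}(i+b)^{-a}\log_2(i+b)$. They then use $a>1$ to bound $Z_N$ above by the convergent series, $Z_N$ below by $(1+b)^{-a}$, and the log-weighted sum by its convergent infinite tail, which gives an $N$-independent constant to divide by the $\Theta(\log N)$ (respectively $\Omega(\log N)$) denominators. The only cosmetic difference is that you derive the $H_2(P_N)+1$ bound from Shannon code lengths and Huffman optimality, whereas the paper simply cites it.
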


Intuitively, when $a>1$, the Zipf tail is light enough that adding more elements contributes very little additional access probability.
The entropy of the truncated access distribution therefore remains bounded as $N$ grows.
A HuffMHT can exploit this skew by assigning short paths to the frequent elements, so its access-weighted average depth stays bounded up to a constant, while a balanced binary MT gives every element a path of length $\Theta(\log N)$.
Because an inclusion proof carries one sibling hash per level, the same separation applies to access-weighted proof size.
The full proof is included in the appendix; see \Cref{sec:proofs}.
\section{Handling Dynamic Access Frequencies}
\label{sec:update-strategies}
We tailor our solution to efficiently handle dynamic workloads where new items may be continually added and access frequencies can change over time.
This contrasts with the standard Huffman coding approach, which considers a fixed frequency distribution over a known element set.
We now go over several different methods for the dynamic setting, starting with completely rebuilding the structure whenever access frequencies change a la \cite{mizrahi2024traffic}, and continuing with two novel which we use in \dsname.

\paragraph{Complete Rebuild Strategy} 

One update approach is to recompute the entire tree each time, whether for single changes or batches of them.
The cost each time is linear in the number of elements (and not in the delta that changed). 
\paragraph{Incremental Update Strategy} \label{sec:incremental-update}
If an item subset $S$ is accessed in some period, then only the ranking of items in $S$ in the access-frequency sequence ordering changes. 
Denote frequencies before and after the update respectively by $f(\cdot)$ and $f'(\cdot)$.
Let elements be sorted in descending access frequencies $e_1, e_2, \dots, e_n$.
For brevity and without loss of generality, suppose only $e_{2}$ changed its relative access frequency such that $e_{j-1} < e_2 < e_j$.
That is:
\begin{align*}
& f(e_{1})  < \mathbf{f(e_{2})}  < \cdots < f(e_{j-1}) < f(e_{j}) < \cdots < f(e_{n}); \\
& f'(e_{1}) < \cdots < f'(e_{j-1}) < \mathbf{f'(e_{2})} < f'(e_{j}) < \cdots < f'(e_{n}).
\end{align*}
Swapping $e_2$ with $e_j$ fixes the code.
This entails moving $e_3$ to $e_2$'s place, $e_4$ to $e_3$, and so on, up to $e_2$ in $e_{j-1}$'s place.
In total, $|S|$ pair-swaps transform $f$ to $f'$.
For each $e_i \in S$ whose access frequency increased from $f(e_i)$ to $f'(e_i)$, find a pair-swap element $e_j$ for $e_i$ where:
$$
e_j \triangleq \arg \min_{e} \{ f(e) \mid f(e) > f(e_i), f'(e) < f'(e_i) \}
$$

That is, $e_j$ is the item whose access frequency is the minimal such that $e_i$ and $e_j$ swap places in access-frequency ranking order.  
The tree can be updated from $f$ to $f'$ by performing these $(e_i, e_j)$ pair-swaps and updating hashes along the affected paths.
The cost of a swap is proportional to the number of affected ancestor hashes, i.e., the size of the union of the two corresponding root paths.
Reflecting all these changes in the tree can be accomplished with $|S|$ pair-swaps.  
\Cref{fig:huffmht_incremental_swap} illustrates this local update for a single pair-swap; where an element's frequency crosses another's, the two elements exchange positions and only the highlighted ancestor hashes are recomputed.

\begin{proposition}[Incremental Pair-swap Update Cost]
\label{res:incremental-pair-swap-cost}
Let $\tau$ be a HuffMHT with $n$ leaves and height $h$ before an update period.
Suppose the frequency-order change induced by an accessed set $S$ is realized by $q\le |S|$ pair-swaps $(x_t,y_t)$.
For swap $t$, let $\mathsf{path}_t(x)$ be the internal nodes on the path from the leaf storing $x$ to the root immediately before the swap.
Then the number of hashes recomputed by the incremental strategy is:
\[
O\left(\sum_{t=1}^{q}
\left|\mathsf{path}_t(x_t)\cup \mathsf{path}_t(y_t)\right|\right)
\le O(|S|h).
\]
Since any binary HuffMHT has $h\le n-1$, the worst-case update cost is $O(|S|n)$ hash computations.
\end{proposition}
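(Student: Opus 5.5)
We argue directly from the hash-dependency structure of a Merkle tree; no earlier result is needed beyond the definition of the HuffMHT. The first step is to fix the invariant that each internal node $v$ stores $H(\text{left}(v)\,\|\,\text{right}(v))$. Its hash therefore changes only if some leaf in its subtree changes its content. A pair-swap $(x_t,y_t)$ exchanges the contents of exactly two leaf positions and leaves the tree shape untouched. So the only internal nodes whose stored hash can become stale are the ancestors of those two positions, namely $\mathsf{path}_t(x_t)\cup\mathsf{path}_t(y_t)$, taken in the tree immediately before swap $t$.

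Recomputation then proceeds bottom-up. We process the nodes of $\mathsf{path}_t(x_t)\cup\mathsf{path}_t(y_t)$ in order of decreasing depth, recomputing each one once from its two children. Any child not in the union is unchanged and still holds a valid hash. This costs $|\mathsf{path}_t(x_t)\cup\mathsf{path}_t(y_t)|$ internal hashes, plus $O(1)$ leaf hashes $H(m)$ for the two moved values. Summing over the $t=1,\dots,q$ swaps gives the first bound. Batching can only reduce this, since a node shared by several swaps' paths needs to be recomputed only once; the sum is therefore an upper bound in either case.

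For the second inequality, observe that a root-to-leaf path contains at most $h$ internal nodes. Hence each union has size at most $2h$, and the sum is at most $2qh\le 2|S|h=O(|S|h)$, using the hypothesis $q\le|S|$. One subtlety should be stated explicitly: the height used is the height before the update period, and swaps do not change it. A swap only permutes leaf contents, so the tree shape, and thus $h$, is invariant across all $q$ swaps. This is what makes $|\mathsf{path}_t(\cdot)|\le h$ valid for every $t$, not just for $t=1$.

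The main point requiring care is the final claim $h\le n-1$. In a full binary tree with $n$ leaves there are exactly $n-1$ internal nodes; a Huffman tree is full, because every merge creates a node with two children. A root-to-leaf path visits distinct internal nodes, so its length is at most $n-1$, giving $h\le n-1$. Substituting this into the previous bound yields $O(|S|n)$. If needed, the bound can be shown tight: for geometrically decaying weights the Huffman tree is a caterpillar of height $n-1$, and swaps involving the deepest leaves touch $\Theta(n)$ ancestors.
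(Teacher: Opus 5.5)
Your proposal is correct and follows essentially the same argument as the paper: nodes off $\mathsf{path}_t(x_t)\cup\mathsf{path}_t(y_t)$ keep valid hashes, and recomputing that union bottom-up plus $O(1)$ leaf hashes gives the per-swap cost. The remaining steps also match, namely that each union has size at most $2h$, summing over $q\le|S|$ swaps, and using $h\le n-1$ for a Huffman tree with no unary nodes. Your extra remarks are valid refinements that the upper bound does not need: swaps preserve the tree shape so $h$ is fixed across all swaps, batching only shrinks the recomputed set, and the caterpillar example shows tightness.
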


\begin{proof}
Consider one pair-swap $(x_t,y_t)$.
All internal nodes outside $\mathsf{path}_t(x_t)\cup \mathsf{path}_t(y_t)$ have the same ordered child hashes before and after the swap, so their stored hashes remain valid.
Every internal node on these two root paths may receive a different child hash, and recomputing each such node once in bottom-up order restores all affected hashes.
The two leaf hashes add only a constant number of hash computations per swap and are absorbed in the asymptotic bound.
Thus swap $t$ costs
$O\left(\left|\mathsf{path}_t(x_t)\cup \mathsf{path}_t(y_t)\right|\right)$
hash computations.
Each root path contains at most $h$ internal nodes, so the union size is at most $2h$.
Summing over the $q\le |S|$ swaps gives $O(|S|h)$.
Finally, a binary tree with $n$ leaves and no unary internal nodes has height at most $n-1$, giving the stated worst-case bound.
\end{proof}

\begin{figure*}
    \centering
    \begin{center}
    \begin{tabular}{ccc}
    \scalebox{0.75}{
    \begin{tikzpicture}[
        x=1cm,y=1cm,
        >=latex,
        font=\small,
        hnode/.style={align=center, inner sep=1pt},
        leaf/.style={align=center, inner sep=1pt},
        swapleaf/.style={draw=red!70!black, very thick, rounded corners=2pt, align=center, inner sep=2pt, minimum width=9mm},
        dirty/.style={draw=orange!80!black, very thick, ->},
        clean/.style={draw=black, ->}
    ]
        \node[hnode] (R) at (0,3) {$H(H_1||H_{2,3,4})$};
        \node[hnode] (H1) at (-1.2,2) {$H_1 := H(m_1)$};
        \node[leaf] (m1) at (-1.2,1) {$m_1$};
        \node[hnode] (H234) at (2.2,2) {$H_{2,3,4} := H(H_2||H_{3,4})$};
        \node[hnode] (H2) at (0.7,1) {$H_2 := H(m_2)$};
        \node[swapleaf] (m2) at (0.7,0) {$m_2$};
        \node[hnode] (H34) at (4,1) {$H_{3,4} := H(H_3||H_4)$};
        \node[hnode] (H3) at (2.7,0) {$H_3 := H(m_3)$};
        \node[leaf] (m3) at  (2.7,-1) {$m_3$};
        \node[hnode] (H4) at    (5.3,0) {$H_4 := H(m_4)$};
        \node[swapleaf] (m4) at (5.3,-1) {$m_4$};

        \draw[clean] (H1) -- (R);
        \draw[clean] (H234) -- (R);
        \draw[clean] (m1) -- (H1);
        \draw[clean] (H2) -- (H234);
        \draw[clean] (m2) -- (H2);
        \draw[clean] (H34) -- (H234);
        \draw[clean] (H3) -- (H34);
        \draw[clean] (H4) -- (H34);
        \draw[clean] (m3) -- (H3);
        \draw[clean] (m4) -- (H4);

        \node[font=\footnotesize] at (1.5,-1.5) {Before: $m_2$ has the shorter path};
    \end{tikzpicture}
    }
    &
    $\Longrightarrow$
    &
    \scalebox{0.75}{
    \begin{tikzpicture}[
        x=1cm,y=1cm,
        >=latex,
        font=\footnotesize,
        hnode/.style={align=center, inner sep=1pt},
        leaf/.style={align=center, inner sep=1pt},
        swapleaf/.style={draw=red!70!black, very thick, rounded corners=2pt, align=center, inner sep=2pt, minimum width=9mm},
        dirty/.style={draw=orange!80!black, very thick, ->},
        clean/.style={draw=black, ->}
    ]
        \node[hnode] (R) at (0,3) {$H(H_1||H_{4,3,2})$};
        \node[hnode] (H1) at (-1.2,2) {$H_1 := H(m_1)$};
        \node[leaf] (m1) at (-1.2,1) {$m_1$};
        \node[hnode] (H432) at (2.2,2) {$H_{4,3,2} := H(H_4||H_{3,2})$};
        \node[hnode] (H4) at (0.7,1) {$H_4 := H(m_4)$};
        \node[swapleaf] (m4) at (0.7,0) {$m_4$};
        \node[hnode] (H32) at (4,1) {$H_{3,2} := H(H_3||H_2)$};
        \node[hnode] (H3) at (2.7,0) {$H_3 := H(m_3)$};
        \node[leaf] (m3) at  (2.7,-1) {$m_3$};
        \node[hnode] (H2) at    (5.3,0) {$H_2 := H(m_2)$};
        \node[swapleaf] (m2) at (5.3,-1) {$m_2$};

        \draw[clean] (H1) -- (R);
        \draw[dirty] (H432) -- (R);
        \draw[clean] (m1) -- (H1);
        \draw[dirty] (H4) -- (H432);
        \draw[dirty] (m4) -- (H4);
        \draw[dirty] (H32) -- (H432);
        \draw[clean] (H3) -- (H32);
        \draw[dirty] (H2) -- (H32);
        \draw[clean] (m3) -- (H3);
        \draw[dirty] (m2) -- (H2);

        \node[font=\footnotesize] at (1.5,-1.5) {After: $m_4$ moves to the shorter path};
    \end{tikzpicture}
    }
    \\
    (a) && (b)
    \end{tabular}
    \end{center}
    \captionDesc{A local pair-swap in HuffMHT. If $m_4$ is accessed more frequently than $m_2$, the update swaps their positions so that the path to $m_4$ is shorter. Only hashes on the highlighted ancestor paths are recomputed.}
    \label{fig:huffmht_incremental_swap}
\end{figure*}

\paragraph{Traditional Adaptive Huffman Coding}
A natural candidate for this setting is adaptive Huffman coding, which maintains a Huffman tree over a stream whose element set is not known in advance~\cite{knuth1985dynamic,vitter1987design}.
It keeps a not-yet-transmitted (NYT) leaf for unseen elements.
When a new element first appears, the NYT leaf is split into a leaf for the new element and a new NYT leaf.
When an existing element is accessed, the weight of its leaf is incremented.
After either case, the tree reshapes by moving nodes along the path from the updated node to the root, using swaps or slide-and-increment steps to restore the sibling property. 

This per-operation tree reshaping is the main obstacle to using adaptive Huffman directly as an ADS.
Each swap or slide changes ordered parent-child relations in the tree, so the hashes on the affected ancestor paths must be recomputed before the next authenticated root is committed.

\begin{restatable}[Adaptive Huffman Maintenance Cost]{proposition}{resAdaptiveHuffmanCost}
\label{res:adaptive-huffman-cost}
Let $\tau$ be an adaptive Huffman tree with $n$ leaves and height $h$ immediately before one access or insertion.
Let $A$ be the set of nodes whose authenticated leaf payload or ordered child relation changes during the update.
Let $D=\bigcup_{v\in A}\mathsf{path}(v)$, where $\mathsf{path}(v)$ contains $v$ and its ancestors up to the root.
Maintaining the authenticated root requires recomputing at most $O(|D|)$ hashes.
If the adaptive Huffman update performs $s$ local structural moves, where each move changes a constant number of ordered child relations, then $|D|\le O(\min\{(s+1)h,n\})$.
Thus the worst-case authenticated maintenance cost of one adaptive Huffman update is $O(n)$.
This bound is tight up to constants, since updating or inserting at a leaf of depth $h$ dirties the entire leaf-to-root path, and an adaptive Huffman tree can have height $h=n-1$.
\end{restatable}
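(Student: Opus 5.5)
The plan mirrors the proof of \cref{res:incremental-pair-swap-cost}, but charges hashes to the dirty set $D$ instead of to individual swaps. First I would establish the \emph{locality claim}. Any internal node $u\notin D$ has the same ordered children after the update, and those children carry the same hashes. Hence $u$'s stored hash stays valid. I would prove this by induction on height over the subtree rooted at $u$. If $u\notin D$, then no descendant of $u$ lies in $A$: if some $v\in A$ lay below $u$, then $u$ would be an ancestor of $v$, so $u\in\mathsf{path}(v)\subseteq D$. So every leaf payload and every ordered child relation inside $u$'s subtree is unchanged, and the hash is unchanged. It follows that recomputing exactly the nodes of $D$, once each in bottom-up (post-order) order, restores a correct root. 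Each recomputation is a single hash of a leaf payload or of the concatenated child hashes. This gives the $O(|D|)$ bound.

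Next comes the bound $|D|\le O(\min\{(s+1)h,n\})$. Each local structural move changes a constant number $c$ of ordered child relations, so it contributes at most $c$ nodes to $A$. The base access or insertion contributes $O(1)$ more: the incremented or new leaf and the split NYT node. Hence $|A|\le cs+O(1)=O(s+1)$. Each $\mathsf{path}(v)$ has at most $h+1$ nodes, and the union bound gives $|D|\le O((s+1)h)$. One subtlety needs care: $h$ is the height before the update, while swaps and NYT splitting may change depths. Adaptive Huffman swaps exchange subtrees of equal weight and leave the tree shape outside the swapped subtrees intact, and an insertion increases height by at most one. So I would argue that the relevant paths have length at most $h+O(1)$, which is absorbed into the constant. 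Alternatively, $\mathsf{path}(v)$ can be read in whichever tree (before or after) defines the dirtied ancestors, and one takes the union of both. Either way the count is still $O((s+1)h)$. Separately, $D$ is a set of nodes of a full binary tree with $n$ leaves (plus $O(1)$ new nodes), so $|D|\le 2n+O(1)=O(n)$. Taking the minimum gives the stated bound, and the $O(n)$ worst-case cost follows immediately.

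For tightness I would construct a caterpillar tree. Take weights that are essentially geometric (e.g.\ Fibonacci-like), so that the Huffman, and hence the adaptive Huffman, tree is a path-like tree of height $n-1$. Then an access to, or an insertion at, the deepest leaf, or at the NYT leaf, which sits at depth $h$ in that construction, places a node of depth $n-1$ into $A$. Its entire root path of $n-1$ internal nodes lies in $D$, and each of these nodes has a changed child hash, so all of them must be recomputed. This gives $\Omega(n)$.

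The main obstacle I expect is the height-change subtlety in the second step: making rigorous that structural moves during the update do not lengthen dirtied paths beyond $h+O(1)$. I would first try to handle it by invoking the sibling property, which guarantees that FGK/Vitter swaps exchange nodes within the same weight block. Failing that, I would simply bound each path by the height of the larger of the pre- and post-update trees, and note that this height is $h+O(1)$ because only one leaf split occurs.
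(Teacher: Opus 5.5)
Your proposal is correct and follows the paper's proof step for step: nodes outside $D$ keep valid hashes, bottom-up recomputation of $D$ costs $O(|D|)$, and $|A|=O(s+1)$ combined with paths of length at most $h$ gives $O((s+1)h)$; the node count caps $|D|$ at $O(n)$, and a Fibonacci-weighted caterpillar gives tightness. The height subtlety you flag, which the paper passes over, is settled by reading $\mathsf{path}(v)$ in the pre-update tree $\tau$: your locality induction still works there, because a node outside $A$ has the same ordered children in both trees and is therefore their parent in both. Your fallback, that the height grows only by $O(1)$ ``because only one leaf split occurs,'' is unsound, since FGK/Vitter swaps exchange equal-weight subtrees at different depths and can themselves change the height.
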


For adaptive Huffman, authenticated maintenance cannot treat a swap as only a pointer update.
Once a leaf payload or an ordered parent-child relation changes, every ancestor hash whose input may contain that changed node must be refreshed before the next root commitment.
An update with $s$ structural moves therefore dirties the union of $O(s+1)$ root paths, each of length at most $h$, but never more than the $O(n)$ nodes in the tree.
The worst case is linear because adaptive Huffman trees can be unbalanced, so a single leaf update can dirty a path of length $n-1$.
The full proof is included in the appendix; see \Cref{sec:proofs}.

\paragraph{Continuous Insertions and Periodic Rebuild}
Our implementation keeps a base HuffMHT for elements that have already been absorbed into the optimized layout, and stores newly inserted or not-yet-absorbed elements in an overflow MT.
We refer to this as the \emph{periodic-rebuild} approach.
The base and overflow components are kept key-disjoint.
Elements stored in the overflow tree remain there until the next rebuild incorporates them into the base HuffMHT.
When an existing element is updated, the periodic-rebuild HuffMHT updates the leaf value and marks the ancestor path dirty, so the next root computation reflects the new value.
The layout changes are delayed until the rebuild boundaries.
When the overflow tree is non-empty, the periodic-rebuild HuffMHT root is computed as $H(\textsf{base-overflow}\Vert r_{\mathrm{base}}\Vert r_{\mathrm{overflow}})$, where $r_{\mathrm{base}}$ is the base HuffMHT root and $r_{\mathrm{overflow}}$ is the overflow Merkle root.
When the overflow MT is empty, the root is computed as $H(\textsf{base-only}\Vert r_{\mathrm{base}})$.
A proof against this root includes a component tag indicating whether the claimed key is authenticated by the base HuffMHT or by the overflow MT.
At a rebuild boundary, the periodic-rebuild HuffMHT merges the live base elements with the overflow elements, rebuilds the base HuffMHT from the latest frequency counts, and clears the overflow tree.
Thus, it handles dynamic insertions without reshaping the Huffman tree after every operation.
We evaluate this design choice in \cref{sec:dynamic-insert-microbenchmark}; \cref{sec:adaptive-huffman-replay-stress} reports a real-trace comparison with adaptive Huffman.

\section{Tiered Tree Architecture}
\label{sec:hmt-arch}

The intuition behind \dsname is that access patterns may be highly skewed, where a small and shifting subset of items are more frequently accessed than others over a given period of time.
To optimize update overhead and membership proof length for different access patterns, we store items in different tiers according to their popularity.
To account for changes in access frequencies, tier migration is applied deterministically at period boundaries before the next global root is committed.

In our evaluation, the system operates with two tiers:
\begin{itemize}
  \item \textbf{Cold Tier:} A balanced binary MT stores infrequent and newly inserted elements, providing logarithmic update paths in the cold tier size.
  \item \textbf{Hot Tier:} Select high-frequency elements are stored in a HuffMHT, reducing access-weighted proof size and path-rehashing work for those elements.
\end{itemize}

For such mutable workloads, a single frequency-optimized HuffMHT shortens paths for hot elements by placing colder elements deeper in the tree.
As a result, updates to cold elements can become expensive.
This motivates the two-tier instantiation above, where cold and newly inserted elements remain in the MT, while elements with sufficient observed access frequency migrate into the periodic-rebuild HuffMHT hot tier defined in \cref{sec:update-strategies}.

\paragraph{Why the Cold Tier Uses a Merkle Tree}
Cold and newly inserted elements have not yet accumulated enough accesses to benefit from a frequency-optimized layout.
Keeping them in a balanced binary MT avoids assigning them to the long tail of a Huffman-shaped tree.
A single HuffMHT gives short paths to hot elements by assigning longer paths to colder ones.
With $N$ leaves, the deepest HuffMHT leaf can have depth $N-1$, compared with $\lceil \log_2 N\rceil$ in a balanced binary MT.
Since inserts and value updates rehash the authenticated path to the root, this tail depth can make cold-element maintenance linear in the number of leaves.
\Cref{fig:hmt-mt-insert-ratio} plots the corresponding worst-case depth ratio using $\frac{N}{log_2 N}$.
HMT keeps cold and new elements in an MT to avoid this tail cost, and migrates elements into the HuffMHT only after their observed frequency warrants shorter hot tier paths.

\begin{proposition}[Worst-case Cold-element Depth Ratio]
\label{res:cold-item-depth-ratio}
For any binary HuffMHT with $N\ge2$ leaves, the maximum leaf depth is at most $N-1$.
Moreover, there exist access weights for which Huffman's algorithm produces a tree with a leaf at depth $N-1$.
In contrast, a balanced binary MT has maximum leaf depth $\lceil\log_2 N\rceil$.
Hence the worst-case authenticated-path depth ratio is
$\Theta\left(\frac{N}{\log N}\right).$
\end{proposition}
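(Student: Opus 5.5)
The argument has three parts: an upper bound on HuffMHT depth, a matching lower-bound construction, and a comparison with the balanced tree.

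\textbf{Upper bound.} Huffman's algorithm merges two nodes at each step, so the resulting tree is a full binary tree: every internal node has exactly two children. Such a tree with $N$ leaves has $N-1$ internal nodes. A root-to-leaf path of length $d$ passes through $d$ distinct internal nodes, one at each depth $0,\dots,d-1$. Hence $d\le N-1$, the same observation already used at the end of the proof of \cref{res:incremental-pair-swap-cost}.

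\textbf{Tightness.} I would take geometric weights $w_i=2^{-i}$ for $i=1,\dots,N-1$ and $w_N=2^{-(N-1)}$, so the two smallest weights are equal. To avoid tie-breaking issues, one can instead use a Fibonacci-type sequence, or weights $2^{-i}$ perturbed slightly.

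The claim to prove by induction is that after $k$ merges the forest consists of a single caterpillar (path-shaped) subtree plus the untouched leaves $w_1,\dots,w_{N-k-1}$. The caterpillar has weight $2^{-(N-k-1)}$ (or at most that, in the perturbed version), and this does not exceed the next smallest leaf weight. Then $w_{N-k-1}$ and the caterpillar are the two smallest items, so Huffman merges them next and the invariant is preserved.

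Any deterministic tie-breaking rule still produces a caterpillar: a tie only arises between the caterpillar and a leaf of equal weight, and merging either pair yields the same shape up to sibling order. The final tree is therefore a caterpillar, so the two deepest leaves sit at depth $N-1$.

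Making this invariant airtight, including the tie-handling, is the only real obstacle. I would first try strictly decreasing Fibonacci-reversed weights, which remove ties entirely.

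\textbf{Balanced MT and the ratio.} A balanced binary MT on $N$ leaves has all leaves at depth $\lceil\log_2 N\rceil$, or at most that if the last level is padded or partial; this is the standard fact that the minimum height of a binary tree with $N$ leaves is $\lceil\log_2 N\rceil$.

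The worst-case ratio is therefore at most $(N-1)/\lceil\log_2 N\rceil$ by the upper bound. It is at least the same quantity for the weights constructed above. Because $(N-1)/\lceil\log_2 N\rceil=\Theta(N/\log N)$ for $N\ge2$, the ratio is $\Theta(N/\log N)$, as claimed.
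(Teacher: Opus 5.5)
Your proof is correct and follows essentially the same route as the paper: a counting bound on root-to-leaf paths in a full binary tree, a weight sequence that forces Huffman's algorithm to build a caterpillar, and division by $\lceil\log_2 N\rceil$. The differences are minor: you count the $N-1$ internal nodes where the paper counts one leaf per sibling subtree plus the endpoint, and you use geometric weights where the paper uses Fibonacci weights. Your geometric weights in fact need no tie-breaking, because at every step the third-smallest weight is strictly larger than the two tied minima, so each merge is forced.
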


\begin{proof}
First consider any root-to-leaf path of depth $d$ in a binary tree with no unary internal nodes.
Each internal node on the path has a sibling subtree containing at least one leaf, and the path ends at one additional leaf.
Thus the tree has at least $d+1$ leaves, so $d\le N-1$.

Take Fibonacci-like access weights $1,1,2,3,5,\dots$, with tie-breaking chosen so that the previously merged subtree is merged with the next lightest leaf at each step.
Equivalently, an arbitrarily small perturbation can remove ties while preserving merge order.
With this order, each Huffman merge attaches one new leaf to the accumulated subtree.
One of the two initially lightest leaves is therefore placed below all $N-1$ internal nodes and has depth $N-1$.
Dividing this HuffMHT depth by the balanced MT depth $\lceil\log_2 N\rceil$ gives the stated $\Theta\left(\frac{N}{\log N}\right)$ ratio.
\end{proof}

\paragraph{Why Two Tiers}
Although \dsname can be generalized to more than two tiers, we use a two-tier instantiation because there is a tradeoff between tier-local path length and global membership-proof overhead.
Adding more tiers can place the hottest elements into a smaller ADS, which may shorten their local depth and result in a smaller proof.
However, the membership proof must include not only the local proof of the tier, but also the root of each tier to reconstruct the global root.
With more tiers, these additional roots increase the overall proof size and can outweigh the marginal reduction in tier-local path length for hot elements.
Since HuffMHT already places frequently accessed elements near the root within the hot tier, the marginal local-proof reduction from further splitting the hot set is limited.
Our tier-count sensitivity analysis in \cref{sec:tier-count-sensitivity} shows that, this additional multi-tier proof overhead compromises the potential hot-element savings; increasing the number of tiers does not provide further reduction in proof size.

Since the benefit of the hot tier requires tracking the most frequently accessed items, \dsname periodically reevaluates access frequencies and migrates items between tiers accordingly.
This adaptive migration is based on the following promotion and demotion rules, whose candidate selection relies on Promotion Cache metadata that tracks tier membership and frequency order for hot tier elements and cold tier promotion candidates, as described in \cref{sec:promotion-cache}.

\paragraph{Promotion}
\dsname monitors access frequencies.
If the highest-scoring lower-tier candidate exceeds a threshold, it becomes eligible for promotion to the next hotter tier.

\paragraph{Demotion}
\dsname supports movement from a hotter tier to a colder one.
The active migration policy determines when an element should leave its current hot tier.

\subsection{Membership-proof Soundness}
Tiering is internal to \dsname, while the interface described in \cref{ADS_Intro} is exposed externally.
We provide a high-level description of the interface in \cref{alg:hmt-ops}.
To allow one root to represent the entire state and maintain cryptographic integrity, the global root is calculated by concatenating the hashes of all individual root tiers and hashing the result.
Given a hash-function $H$ and $T$ root tiers $\left\{\mathrm{Root}_i\right\}_{i \in \{0,\dots,T-1\}}$, we define the global root $\mathrm{Root}_\mathrm{global}$:
\begin{center}
        $\mathrm{Root}_\mathrm{global} = H\left(\mathrm{Root}_0 \Vert \mathrm{Root}_1 \Vert \cdots \Vert \mathrm{Root}_{T-1}\right)$
\end{center}

We now show membership-proof soundness for \dsname.
Relative to an authenticated root $c$, no probabilistic polynomial-time adversary can output $(k,v,\Pi)$ such that $\mathsf{Verify}(c,k,v,\Pi)=1$ but $(k,v)$ is not represented by the committed tier roots, except with negligible probability.

We assume that $H$ is collision resistant and that all hashed inputs are prefix-free and domain separated, including leaf encodings, internal-node encodings, empty roots, tier-root tuples, and global-root tuples.
Throughout the algorithms, every displayed concatenation is shorthand for this canonical domain-separated encoding.
We also assume that the committed state is key-disjoint across tiers, so at any committed root, each live key appears in exactly one tier.
The adversary is given the public algorithms, tier policy, public metadata, and commitment $c$, and may choose any purported value and proof.
The committed dataset is the state represented by the honest tier roots at the time $c$ is produced.

\begin{proposition}[Multi-tier Membership Soundness]
\label{resultMultiTierMembership}
Let $D_0,\dots,D_{T-1}$ be key-disjoint tier datasets, and let each tier $i$ be represented by a membership-sound ADS $\mathcal{A}_i$ with root $r_i=\mathcal{A}_i.\mathsf{Commit}(D_i)$.
Let the global commitment be the hash of the ordered tuple $(r_0,\dots,r_{T-1})$; in \cref{alg:hmt-ops}, the displayed concatenation denotes this fixed tuple encoding.
So the resulting ADS is membership sound.
\end{proposition}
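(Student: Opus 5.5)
The proof is a standard reduction: from any adversary that breaks soundness of the composed multi-tier ADS, build either a collision finder for $H$ or an adversary that breaks membership soundness of some tier ADS $\mathcal{A}_i$.

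First I make the composed proof format precise, following \cref{alg:hmt-ops}. A proof $\Pi$ for $(k,v)$ consists of a tier index $i\in\{0,\dots,T-1\}$, a claimed tuple of tier roots $(r'_0,\dots,r'_{T-1})$, and a tier-local proof $\pi$. $\mathsf{Verify}(c,k,v,\Pi)$ accepts iff two conditions hold:
\[
H\bigl(\mathrm{enc}(r'_0,\dots,r'_{T-1})\bigr)=c \quad\text{and}\quad \mathcal{A}_i.\mathsf{Verify}(r'_i,k,v,\pi)=1 .
\]
Here $c=H(\mathrm{enc}(r_0,\dots,r_{T-1}))$ with the honest roots $r_j=\mathcal{A}_j.\mathsf{Commit}(D_j)$. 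The committed dataset is $D=\bigcup_j D_j$, and since the tiers are key-disjoint, $(k,v)\in D$ iff $(k,v)\in D_j$ for the unique tier $j$ holding $k$.

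Now fix a PPT adversary $\mathcal{B}$ that outputs $(k,v,\Pi)$ with $\mathsf{Verify}(c,k,v,\Pi)=1$ but $(k,v)\notin D$. I split into two cases.

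\emph{Case 1: the claimed tuple differs from the honest one}, i.e. $(r'_0,\dots,r'_{T-1})\neq(r_0,\dots,r_{T-1})$. The encoding is an injective, prefix-free, domain-separated tuple encoding, so the two encodings are distinct strings that hash to the same $c$. This is a collision for $H$, so the reduction outputs $\bigl(\mathrm{enc}(r'),\mathrm{enc}(r)\bigr)$. This case has negligible probability by collision resistance.

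\emph{Case 2: the claimed tuple is the honest one.} Then $r'_i=r_i=\mathcal{A}_i.\mathsf{Commit}(D_i)$ and $\mathcal{A}_i.\mathsf{Verify}(r_i,k,v,\pi)=1$. Since $(k,v)\notin D\supseteq D_i$, we have $(k,v)\notin D_i$, so $(k,v,\pi)$ breaks membership soundness of $\mathcal{A}_i$. The reduction $\mathcal{B}_i$ works as follows. It is given $D_i$ or $r_i$ by its challenger. It computes the other roots itself: the datasets $D_j$ for $j\neq i$ are part of the fixed honest state, and in the soundness game the state is chosen independently of the adversary. From these it forms $c$ and runs $\mathcal{B}$. $\mathcal{B}_i$ guesses $i$ uniformly, or simply reads $i$ from $\Pi$, and forwards $(k,v,\pi)$. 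This loses at most a factor of $T$.

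Summing the two cases gives
\[
\Pr[\mathcal{B}\text{ wins}]\le \mathrm{Adv}^{\mathrm{cr}}_H+\sum_i \mathrm{Adv}^{\mathrm{sound}}_{\mathcal{A}_i},
\]
which is negligible. Note that key-disjointness is not needed for soundness itself. Its role is to make the membership relation unambiguous, so the composed structure is a well-defined map and the ``iff'' completeness condition in \cref{def:ads-api} holds.

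I expect the main obstacle to be getting Case 1 fully rigorous. It depends on the encoding assumption: different tuples (including different $T$ or different root lengths) must never encode to the same string. It also requires ruling out a cross-domain collision, where a string accepted as a global-root preimage is also, say, an internal-node encoding. The domain-separation assumption stated before the proposition covers exactly this, so I will invoke it explicitly rather than re-derive it. A secondary subtlety is the modelling of who chooses $D_j$ in the reduction; I will state that the $D_j$ for $j\neq i$ are known to the reduction, consistent with the stated threat model.
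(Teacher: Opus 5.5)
Your proposal is correct and follows the paper's argument. You split on whether the claimed tuple of tier roots equals the honest one, getting a collision in $H$ if it does not, and a forgery against tier $i$'s membership soundness if it does (since $(k,v)\notin\bigcup_j D_j$ implies $(k,v)\notin D_i$); your explicit reduction and advantage bound are simply a more careful version of the paper's two-line proof. One small point: in Case~1 you only need the tuple encoding to be injective, because both colliding preimages are global-tuple encodings, so the cross-domain collision you flag belongs to the tier-local (Merkle/HuffMHT) soundness arguments rather than to this proposition.
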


\begin{proof}
Suppose an adversary produces a false accepting proof for $(k,v)\notin \bigcup_i D_i$, denoted by $(\hat r_0,\dots,\hat r_{T-1})$, and let the honest proof by $(r_0^\star,\dots,r_{T-1}^\star)$.
If they differ, acceptance gives two distinct inputs with the same hash, contradicting collision resistance.
Otherwise, the local proof verifies against the honest root $r_i^\star$, although $(k,v)\notin D_i$, contradicting membership tier $i$'s soundness.
\end{proof}

For the sake of completeness, we also provide \cref{resultHuffMembership}.
\begin{proposition}[HuffMHT Membership Soundness]
\label{resultHuffMembership}
A HuffMHT with key-value leaves and ordered-child internal hashes is membership sound when $H$ is collision resistant.
\end{proposition}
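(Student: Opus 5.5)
The plan is the standard Merkle-tree reduction: any accepting proof for a pair not in the committed dataset yields a hash collision. I will show this by comparing the verifier's recomputation path with the honest tree, node by node, from the root downward.

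Fix the honest HuffMHT $\tau$ for dataset $D$, with root $c=\mathsf{Commit}(D)$. Suppose an adversary outputs $(k,v,\pi)$ with $\mathsf{Verify}(c,k,v,\pi)=1$ but $(k,v)\notin D$. The proof $\pi$ is a sequence of sibling hashes with left/right orientation bits $(s_1,b_1),\dots,(s_d,b_d)$, listed from leaf to root. The verifier computes $h_0=H(\mathsf{leaf}\Vert k\Vert v)$ and then $h_j=H(\mathsf{node}\Vert h_{j-1}\Vert s_j)$ or $H(\mathsf{node}\Vert s_j\Vert h_{j-1})$ according to $b_j$, and accepts iff $h_d=c$. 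The orientation bits define a path in $\tau$ starting at the root: at step $j$ counted from the root, go to the child indicated by $b_{d-j+1}$.

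The key step is an induction from the root down. I maintain the invariant that the verifier's value $h_{d-j}$ equals the honest hash of the node $u_j$ reached after $j$ steps, and that $u_j$ is an honest internal node whenever more proof steps remain.

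1. The base case $j=0$ is acceptance itself: $h_d=c$ is the honest root hash.
2. For the inductive step, $u_j$ is an internal node whose honest hash is $H(\mathsf{node}\Vert \mathrm{hash}(\mathrm{left})\Vert \mathrm{hash}(\mathrm{right}))$. The verifier's value is $H(\mathsf{node}\Vert \cdot\Vert \cdot)$ of its own pair. If the two input pairs differ, the adversary has produced a collision. Otherwise the pairs are equal, which gives $h_{d-j-1}=\mathrm{hash}(u_{j+1})$, and the induction continues.
3. Prefix-free domain separation handles the length mismatches, as follows.
   If the honest path reaches a leaf while proof steps remain, the honest leaf input $\mathsf{leaf}\Vert\cdot$ equals a verifier input $\mathsf{node}\Vert\cdot$ under the same hash value. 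This is a collision between distinct encodings.
   If the proof is exhausted at an internal node $u$, then $H(\mathsf{leaf}\Vert k\Vert v)=\mathrm{hash}(u)=H(\mathsf{node}\Vert\ldots)$. This is again a collision between distinct encodings.
4. In the remaining case both sides end simultaneously at an honest leaf storing $(k',v')$. Then $H(\mathsf{leaf}\Vert k\Vert v)=H(\mathsf{leaf}\Vert k'\Vert v')$ with $(k',v')\in D$. Since $(k,v)\notin D$, the two inputs differ, giving a collision.

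Every case ends in a collision. The reduction runs in polynomial time: it simulates the adversary and walks $\tau$ for at most $d$ steps. Its collision-finding success probability therefore equals the adversary's forging probability, which must be negligible. This matches the soundness notion used for \cref{resultMultiTierMembership}.

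The step needing the most care is the interaction between the Huffman shape and the proof length. Unlike a balanced MT, leaf depths vary, and a proof of the "wrong" length could land on an internal node or overshoot a leaf. I will handle this using only the prefix-free, domain-separated encodings assumed earlier. The argument therefore depends on those encoding assumptions and not on any property of the tree shape, so it applies verbatim to any ordered binary tree, including one after the pair-swaps of \cref{res:incremental-pair-swap-cost}.
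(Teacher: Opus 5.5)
Your proposal is correct and follows the same route as the paper: you walk the verifier's recomputation against the honest tree from the committed root downward, and you extract a collision at the first node where the hashed inputs differ. You are more explicit than the paper about the depth-mismatch cases (a proof that overshoots a leaf or stops at an internal node), which you resolve with the leaf/node domain separation the paper assumes. That assumption is exactly what the paper's remark that the argument is independent of balance relies on.
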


\begin{proof}
The proof is similar to the canonical MT argument.
A false accepting proof reconstructs the honest root from a claimed leaf and an ordered sequence of siblings.
Comparing the reconstructed path with the honest tree, the first point at which different encoded inputs yield the same digest gives a collision in $H$.
The argument is independent of balance because Huffman layout changes only path lengths, not the bottom-up authentication recurrence.
\end{proof}

\begin{proposition}[\dsname Membership Soundness]
The \dsname construction is membership sound when its constituent tier ADSs are membership sound and $H$ is collision resistant.
\end{proposition}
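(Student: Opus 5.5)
The plan is to obtain the statement by composing \cref{resultMultiTierMembership} with \cref{resultHuffMembership}, after checking that every component of \dsname fits the hypotheses of those two results.

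\textbf{Step 1: identify the tiers.} In the two-tier instantiation, tier $0$ is the cold balanced binary MT. Tier $1$ is the periodic-rebuild hot tier from \cref{sec:update-strategies}.

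\textbf{Step 2: show each tier is membership sound.}
\begin{itemize}
\item The cold MT is sound by the standard Merkle argument. This is in fact the special case of \cref{resultHuffMembership} for a balanced layout.
\item The hot tier is itself a composite. Its root is $H(\textsf{base-overflow}\Vert r_{\mathrm{base}}\Vert r_{\mathrm{overflow}})$ or $H(\textsf{base-only}\Vert r_{\mathrm{base}})$, and a proof carries a component tag. I would treat it as a two-component instance of the same argument as \cref{resultMultiTierMembership}, with the base HuffMHT and the overflow MT as components. The base HuffMHT is sound by \cref{resultHuffMembership}; the overflow MT is sound as an MT; and the base and overflow are key-disjoint by construction.
\item The one extra point is the two root encodings. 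Because inputs are domain separated, a proof verified against a base-only root cannot be reinterpreted as one against a base-overflow root, or vice versa. Any such confusion would yield distinct hash inputs with equal digests, i.e.\ a collision in $H$.
\end{itemize}

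\textbf{Step 3: apply the multi-tier result.} With each tier membership sound and the committed state key-disjoint across tiers (a standing assumption), \cref{resultMultiTierMembership} applies with $T=2$. By the same proof, it applies to any $T$ if more tiers are used. The global root $H(\mathrm{Root}_0\Vert\cdots\Vert\mathrm{Root}_{T-1})$ is exactly the fixed tuple encoding required there.

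\textbf{Step 4: the adversary's success probability.} An adversary producing an accepting proof for $(k,v)$ not in the committed state yields one of two outcomes:
\begin{itemize}
\item a collision at the global or hot-tier combining hash, or
\item a forgery against one constituent ADS, which in turn yields a collision in $H$.
\end{itemize}
A union bound over these constantly many reductions gives negligible success probability.

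\textbf{Main obstacle: the dynamic aspects.} Migration, batched swaps and rebuilds must not break the hypotheses. Soundness is defined relative to a single commitment $c$ and the honest tier roots at that moment. So I would argue that all layout changes (pair-swaps, rebuilds, promotions and demotions) happen before the root is committed. At each committed root, the tier states form a fixed snapshot satisfying key-disjointness, and the static argument then applies snapshot by snapshot. Layout metadata, such as the component tag and left/right orientations, is adversarially chosen but public. It enters only through hash inputs, so a wrong tag or orientation again reduces to a hash collision. This point needs to be stated explicitly rather than proved anew.
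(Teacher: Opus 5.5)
Your proposal is correct and follows essentially the same route as the paper: the HuffMHT and MT components are sound by the Merkle-path argument, and the periodic-rebuild hot tier is a domain-separated commitment to its mode and constituent roots, so a forgery against it yields a component forgery or a collision. Applying \cref{resultMultiTierMembership} to the tier roots then finishes the proof. Your observation that swaps, rebuilds and migrations all happen before the root is committed matches the paper's closing remark that the CMS, promotion cache and migration policy only decide tier placement and do not affect verification once the roots are fixed.
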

\begin{proof}
By \cref{resultHuffMembership}, HuffMHTs are membership sound; ordinary Merkle trees are membership sound by the same argument.
A periodic-rebuild HuffMHT is membership sound when its exported root is a domain-separated commitment to its mode and constituent roots.
If the overflow tree is empty, the exported root commits to the base HuffMHT root in the base-only domain.
If the overflow tree is non-empty, the exported root commits to the ordered pair consisting of the base HuffMHT root and the overflow Merkle root, and a proof identifies which component contains the claimed key.
In either case, a false accepting proof yields either a false proof for the base HuffMHT, a false proof for the overflow MT, or a collision in the exported-root hash.
Applying \cref{resultMultiTierMembership} to the resulting tier roots gives membership soundness of the global \dsname commitment.
The CMS, promotion cache, and migration policy only determine which tier stores each key before the next root is committed; they do not change verification once the tier roots are fixed.
\end{proof}

\subsection{Frequency Tracking via Count-Min Sketch (CMS)}

To dynamically optimize layout, \dsname maintains access-frequency estimates for when items should move between tiers.
Instead of exact per-item counters, we use Count-Min Sketch (CMS)~\cite{cormode2005countmin}, a $d \times w$ counter array.
With $w=\lceil \frac{e}{\epsilon}\rceil$ and $(d=\lceil \ln(\frac{1}{\delta})\rceil$, CMS uses $O(wd)$ counters.
For a fixed queried key $x$, its estimate $\hat f(x)$ satisfies $f(x)\le \hat f(x)\le f(x)+\epsilon T$ with probability at least $1-\delta$, where $f(x)$ is the true frequency of $x$ and $T$ is the total observed accesses.
This guarantee is per queried key, rather than over all observed keys at once.

E.g., with $\epsilon=\delta=10^{-6}$ and 32-bit (4-byte) counters, CMS uses $4\lceil \frac{e}{\epsilon}\rceil\lceil\ln(\frac{1}{\delta})\rceil\approx 152$ MB, and the per-query overestimation bound is at most $10^3$ accesses when $T=10^9$.
In contrast, exact counting for $10^7$ distinct 32-byte keys needs at least 360 MB for keys and counters, excluding hash-table overheads.
So, after $(\epsilon,\delta)$ are chosen, the CMS memory budget is fixed rather than proportional to the number of distinct elements.
\dsname derives CMS row indices from one BLAKE3 hash computation~\cite{oconnor2020blake3}, and uses conservative update to reduce overestimation in practice~\cite{goyal2012sketch,benmazziane2022cmscu}.
The standard CMS bound assumes suitable row hash functions; our deterministic hash instantiation is an engineering choice so replicas derive the same metadata from the same access stream.
A CMS update or query touches one counter in each of the $d$ rows, so CMS maintenance costs $O(d)$ time per operation.
\subsection{Promotion Cache} \label{sec:promotion-cache}

To coordinate element migration, \dsname uses an in-memory \emph{Promotion Cache} as policy metadata, not as authenticated state.
For each tracked key, the Promotion Cache records its current tier and maintains the frequency order used for promotion and demotion decisions.

For a hot tier with capacity $C_i$, the corresponding Promotion Cache also has capacity $C_i$; every key in the hot tier ADS has a cache entry.
Cold tier cache entries, by contrast, are only promotion candidates and may be evicted from metadata; hot tier entries record exact tier membership and must remain until the key is demoted or deleted.
The cached frequency order identifies candidates for promotion to a hotter tier and cold keys for demotion.
The cold tier may be much larger, so its Promotion Cache does not mirror the full cold tier ADS; it tracks only possible promotion candidates.
Thus, CMS estimates the frequency of a queried key, and the Promotion Cache determines which keys are currently eligible for migration.
CMS does not enumerate hot keys by itself; promotion and demotion candidates are selected only from the bounded set of keys tracked by the Promotion Cache.
This means cold keys absent from the cache may be missed until they are observed again, and CMS overestimation can affect layout quality but not proof soundness.
Promotion and demotion update both the affected ADS tiers and the corresponding Promotion Cache entries.

We implement each tier's cache by batching keys into frequency ranges of width $R$, with the replacement policy following the least-recently-used (LRU) method \cite{maffeis1993cache}.
This reduces the number of bucket relocations and avoids moving keys after every frequency change.

All Promotion Cache instances share one CMS for frequency estimation. 
\textsc{Locate} scans the $L$ caches and performs expected $O(1)$ key lookup in each, so it costs $O(L)$.
In our evaluated configurations, $L$ is fixed and small, so this lookup component is constant.

When a key is accessed, \dsname updates its CMS frequency estimate in $O(d)$. 
If the key remains in the same frequency bucket, the update is $O(1)$; otherwise, finding or creating the destination bucket costs $O(K_i)$ in the worst case, where $K_i$ is the number of active buckets in tier $i$. 
Eviction is $O(1)$ by removing the least-recent key from the lowest-frequency bucket, given maintained pointers to the minimum-frequency bucket.  
\Cref{alg:lfu} summarizes the Promotion Cache operations used to locate, update, and migrate elements.

The total space is $O(wd+\sum_i(N_i+K_i))$, where $O(wd)$ is the shared CMS space, $N_i\le C_i$ is the number of cached keys in tier $i$, and $K_i$ is the number of active buckets.
\section{Dynamic Customization}
\label{sec:dynamic}
HMT dynamically adjusts tiering periodically per batch of accesses, where batch (period) length is a tunable parameter.
At each period boundary, migration rules are applied given three policy choices (summarized in \cref{tab:migration_policy_choices}): how to score promotion candidates, how often to evaluate migration, and whether admission parameters remain fixed or feedback-controlled.
The resulting implemented variants are summarized separately in \cref{tab:migration_policy_variants}.
We number tiers from coldest to hottest as $0,\dots,L-1$. 
For adjacent tiers $i$ and $i+1$, promotion edge $i$ denotes the tier boundary that an element crosses when it moves from tier $i$ to the hotter tier $i+1$.
\begin{table}
\centering
\captionDesc{{Policy choices for tier migration.}}
\small
\begin{tabular}{m{5.2em}m{10em}m{10em}}
\toprule
{\textbf{Choice}}
&
{\textbf{What it controls}}
&
{\textbf{Options}}
\\ \midrule
{\textbf{Promotion score}}
&
{Which elements are considered hot.}
&
{Based on cumulative frequency, lifetime rate, or recent-window rate.}
\\ \midrule
{\textbf{Evaluation schedule}}
&
{When promotion and demotion are checked.}
&
{Every $I$ periods, for different $I$ values.}
\\ \midrule
{\textbf{Parameter control}}
&
{Whether hot tier threshold and capacity change over time.}
&
{Fixed or feedback-controlled.}
\\ \bottomrule
\end{tabular}
\label{tab:migration_policy_choices}
\end{table}

\begin{table}
\centering
\captionDesc{Implemented tier-migration policy variants.}
\begingroup
\footnotesize
\setlength{\tabcolsep}{2pt}
\renewcommand{\arraystretch}{1.14}
\begin{tabular}{@{}p{0.16\columnwidth}p{0.23\columnwidth}p{0.17\columnwidth}p{0.19\columnwidth}p{0.15\columnwidth}@{}}
\toprule
\textbf{Variant}
&
\textbf{Promotion score}
&
\textbf{Evaluation}
&
\textbf{Control}
&
\textbf{Main tradeoff}
\\ \midrule
\textbf{Absolute-Threshold}
&
Cumulative access count from the CMS.
&
Every period.
&
Fixed threshold and capacity.
&
Simple; history never decays.
\\ \midrule
\textbf{Ratio-Based}
&
Lifetime access rate.
&
Every period.
&
Fixed threshold and capacity.
&
Stable; new bursts may lag.
\\ \midrule
\textbf{Periodic-Evaluation}
&
Lifetime access rate.
&
Every configured number of periods.
&
Fixed threshold and capacity.
&
Fewer checks; slower response.
\\ \midrule
\textbf{Sliding-Window}
&
Recent-window access rate.
&
Every period.
&
Fixed threshold with delayed demotion.
&
Tracks recent hot-set shifts.
\\ \midrule
\textbf{Dynamic-Control}
&
CMS-based lifetime access rate.
&
Promotion every period; review feedback periodically.
&
Feedback-controlled threshold and capacity.
&
Adapts admission and tier size.
\\ \bottomrule
\end{tabular}
\endgroup
\label{tab:migration_policy_variants}
\end{table}

\subsection{Absolute-Threshold}
For promotion edge $i$, Absolute-Threshold scores each candidate $x$ tracked by the Promotion Cache for tier $i$ according to its cumulative CMS estimate $\hat f(x)$.
An item is eligible for promotion when $\hat f(x) > \theta_i$, where $\theta_i$ is the configured minimum cumulative frequency required for promotion from tier $i$ to $i+1$.
If the destination hot tier has room, $x$ is inserted there and removed from the colder tier.
If the hot tier is full, \dsname compares the candidate with the least-frequent element already admitted to that tier. 
The swap happens only when the candidate's cumulative estimate is strictly larger. 
A potential drawback with this design is that counts never decay. 
An element that was hot early in the run can keep a large score after its activity drops, while a newly hot element may need many periods to catch up, which can be undesirable when workloads change rapidly.
We provide the pseudocode in \cref{alg:absolute-threshold} with more details.

\subsection{Ratio-Based}

Ratio-Based uses a lifetime access-rate score. 
For promotion edge $i$ and number of elapsed periods $B$, it scores each Promotion Cache candidate $x$ by $s_B(x)=\frac{\hat f(x)}{B}$.
A candidate is eligible for promotion when $s_B(x)>\theta_i$, where $\theta_i$ is the configured minimum lifetime access rate for promotion across edge $i$.
Because candidates considered at a given moment share the same denominator $B$, Ratio-Based ranks candidates in the same order as Absolute-Threshold; equivalently, the eligibility test is $\hat f(x)>B\theta_i$.
This normalization makes the promotion threshold grow with run length, but the policy relies on lifetime counts, so stale hot keys may remain favored and newly hot keys may take time to become promotion candidates.

\subsection{Sliding-Window}

Sliding-Window \dsname targets workloads where hot-set changes are expected over short time scales.
Unlike the lifetime-frequency policies, it does not use cumulative CMS estimates for promotion, since CMS counters do not expire accesses that leave the window.
Instead, \dsname maintains a ring of recent period histograms and an aggregate window histogram $H_{\mathcal{W}}$; between periods, it removes the oldest period histogram and adds the newest one.

Let the number of periods included in the sliding window be $W$.
For element $x$, let $H_{\mathcal{W}}(x)$ be its frequency in the current window $\mathcal{W}$, and define:
$s_W(x)=\frac{H_{\mathcal{W}}(x)}{W}$.
For promotion edge $i$, a candidate $x$ is eligible for promotion when $s_W(x)\ge\theta_i$, where $\theta_i$ is the configured minimum window-local access rate for promotion across edge $i$.
Because accesses outside the window are removed from $H_{\mathcal{W}}$, stale hot items lose influence once they stop appearing in recent periods.
A hot-tier item is not demoted immediately when its window-local rate falls below the threshold.
The implementation places the stale hot element on a per-edge demotion wheel and rechecks it after a delay; demotion occurs only if the item still $s_W(x)<\theta_i$ when its wheel slot expires.
This lazy demotion policy reduces churn for items oscillating around the boundary while preserving responsiveness to bursts.
\Cref{alg:sliding-window} gives the pseudocode with more details on how it works.

\subsection{Dynamic-Control}

Dynamic-Control \dsname removes the need to choose one static promotion threshold and one static hot tier capacity for all workload phases. 
For promotion edge $i$, we maintain a threshold $\theta_i$, hot tier capacity $C_i$, and two feedback loops.

\paragraph{Threshold Control}
Every $M$ periods, the controller checks occupancy $\rho_i=\frac{n_i}{C_i}$ (where $n_i$ is the current hot tier size) and rejection ratio $q_i=\frac{R_i}{A_i}$ (where $A_i$ and $R_i$ are the promotion attempts and rejected attempts in the window). 
The terms ``high'' and ``low'' are policy choices: one can configure an occupancy band $[\rho_{\mathrm{low}},\rho_{\mathrm{high}}]$ and a rejection threshold $q_{\mathrm{high}}$. 
If $\rho_i>\rho_{\mathrm{high}}$ and $q_i>q_{\mathrm{high}}$, the controller raises $\theta_i$ to admit fewer candidates and reduce churn. 
If $\rho_i<\rho_{\mathrm{low}}$ for $U$ consecutive control windows, it lowers $\theta_i$.
The threshold is bounded by $[\theta_i^{\min},\theta_i^{\max}]$, where $\theta_i^{\min}$ is configured and $\theta_i^{\max}$ is the edge-$i$ upper bound derived from the least-frequent element currently admitted to the destination tier $i+1$.
This upper bound prevents the controller from setting the promotion threshold above the access rate at the tail of tier $i+1$. 
Thus, a candidate in tier $i$ that is hotter than the current tail of tier $i+1$ can still be considered for promotion, allowing stale elements in tier $i+1$ to be replaced by more frequently accessed candidates.

\paragraph{Capacity Control}
A loop evaluates the hot tier capacity in every $K$ control window, i.e., every $KM$ periods. 
The controller increases $C_i$ if the hot tier remains congested while $\theta_i$ is near $\theta_i^{\max}$, and shrinks $C_i$ if the tier is under-utilized while $\theta_i$ is near $\theta_i^{\min}$, so capacity changes only after threshold adjustment is exhausted for the current hot-tier configuration.

The two loops run at different cadences. 
Threshold control reacts to short-term admission pressure, whereas capacity control adjusts the hot tier capacity only after that pressure persists across multiple threshold-control windows.

We report the parameter values used in our experiments in \cref{sec:evaluation}, but the design is not tied to those numeric choices.
For replicated deployments, the parameters are fixed for a run, and controller updates occur only at period boundaries.
Given the same access stream and deterministic tie-breaking, replicas therefore make the same promotion and demotion decisions.
\Cref{alg:dynamic-control} gives the pseudocode with more details.

\section{Evaluation}
\label{sec:evaluation}
We evaluate \dsname using two types of benchmarks.
First, controlled microbenchmarks evaluate the layout-update strategies introduced in \cref{sec:incremental-update}.
They measure the maintenance cost of rebuilding and updating Huffman-authenticated layouts under skewed workloads.
Second, we employ a deterministic replay of real Ethereum state accesses at account granularity, where each Ethereum account is treated as one authenticated element, and every compared ADS processes the same operation trace in the same order.

\subsection{Layout-Update Microbenchmarks}
\label{sec:layout-update-microbenchmarks}

\subsubsection{Continuous Updates}
We evaluate the incremental update strategy of \cref{sec:incremental-update} by comparing two HuffMHT instances over the same key set: one performs a full rebuild at each batch boundary, while the other batches accesses between boundaries and applies frequency-order pair swaps.

We first insert $N=100{,}000$ keys into both instances, and then apply $5{,}000{,}000$ get/update operations drawn from a Zipf distribution with $a=4$, $b=1$.
This removes insertion cost from the comparison, so the measured cost at each batch boundary comes only from adapting the tree shape to changed access frequencies.
At every $500{,}000$-operation batch boundary, we compare rebuilding the Huffman tree from current frequencies against applying frequency-order pair swaps.
\Cref{fig:zipf_milestone_updates} reports the per-batch restructuring time.
Batch 1 is used only to rebuild both instances from the frequencies observed during the first $500{,}000$ operations.
The full-rebuild and batched pair-swap strategies are compared from batch 2 onward.
Over batches 2--10, swaps average $47.1$ ms, while full rebuilds average $100.3$ ms, a $2.13\times$ reduction.
This shows that incremental restructuring can reduce rebuild overhead for stable key sets.

\begin{figure}
    \centering
    \begin{minipage}{0.48\columnwidth}
        \centering
        \includegraphics[width=\linewidth]{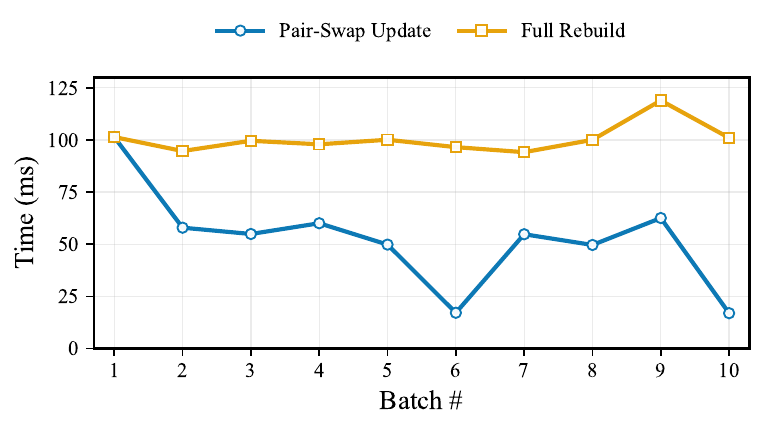}
        \captionDesc{Restructuring times for a stream of batched-accesses which induce frequency changes over a fix set of items. 
        (y-axis Batch\# instead of Milestone (also in the main text) (Done)}
        \label{fig:zipf_milestone_updates}
    \end{minipage}
    \hfill
    \begin{minipage}{0.48\columnwidth}
        \centering
        \includegraphics[width=\linewidth]{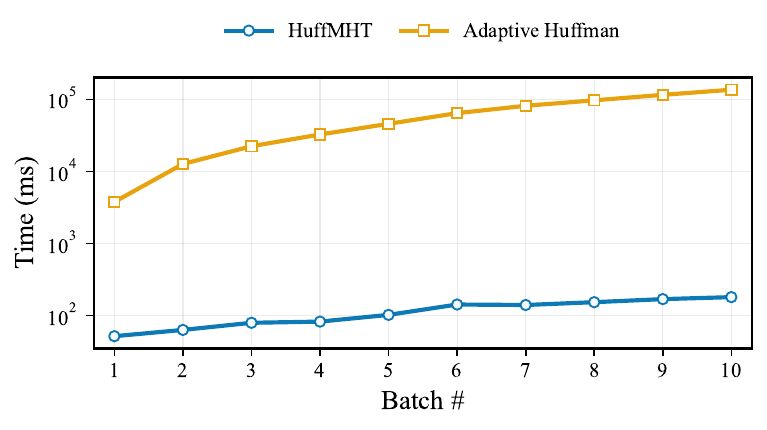}
        \captionDesc{Update times for a stream of batched-accesses which induce both frequency changes and new insertions.}
        \label{fig:dynamic_insert_maintenance}
    \end{minipage}
\end{figure}

\subsubsection{Continuous Inserts}
\label{sec:dynamic-insert-microbenchmark}
We next evaluate update time under continuous insertions.
The workload starts from an empty tree and inserts exactly $N=100{,}000$ keys over $5{,}000{,}000$ operations.
The first operation is forced to be an insert, and the remaining insert positions are randomly distributed across the run.
Non-insert operations (get/update) access already-inserted keys under a Zipf distribution with $a=4$, $b=1$.
We compare periodic-rebuild HuffMHT, which inserts new keys into an overflow MT and rebuilds at each $500{,}000$-operation batch boundary, against an Adaptive Huffman tree, which updates its tree shape after every insert, get, or update.
\Cref{fig:dynamic_insert_maintenance} reports the update time measured for every batch of 500,000-operations.
For periodic-rebuild HuffMHT, this time includes the operations performed during the batch, plus the rebuild time at the batch boundary.
For Adaptive Huffman, this time includes the adaptive tree update performed after each operation during the same batch.
At the final batch, periodic-rebuild HuffMHT spends about $180$ ms, while Adaptive Huffman spends about $136$ seconds.
This gap comes from the cost of maintaining an authenticated Huffman layout after every operation.
Periodic rebuilds avoid this per-operation restructuring by putting new elements in the overflow MT and rebuilding the HuffMHT only at batch boundaries.

\subsection{Ethereum Replay Evaluation}
\label{sec:ethereum-replay-evaluation}
\subsubsection{Experimental Setup}

\paragraph{Data}
The input is a trace of go-ethereum \texttt{StateDB} operations recorded during block execution.
All compared ADSs process this trace in the same order.
The replayer follows \texttt{snapshot}/\texttt{revertsnapshot} markers so reverted operations do not affect ADS state.

Every Ethereum account is mapped to one authenticated element.
The trace records 20-byte Ethereum account addresses; before lookup or update, the replayer maps each address to the same 32-byte account key used by Ethereum's account trie.
Storage operations are attributed to the owning contract account.
A \texttt{getstate} contributes one read access to that account, and a \texttt{setstate} contributes one update access with the recorded storage value used as deterministic payload.
At each \texttt{commit} marker, the replayer recomputes each ADS root and records the per-block total hash input and weighted average proof size for the block.

The replay starts at block 17{,}000{,}000 without reconstructing the full Ethereum pre-state.
On first access to an account, every ADS materializes it with the same deterministic placeholder before applying the recorded operation. 
Thus, all ADS variants process the same account-level operations, in the same order, with deterministic payloads.

\paragraph{Compared Data Structures}
We compare the following:
\begin{itemize}
    \item \textbf{MPT}: an Ethereum-style Merkle Patricia Trie baseline.
    \item \textbf{UBT}: a unified binary authenticated tree baseline, representing the binary-tree direction considered for Ethereum state~\cite{buterin2025eip7864}.
    \item \textbf{Ratio-Based \dsname}: a two-tier configuration, where the hot and cold tiers are respectively instantiated using a HuffMHT and a MT.
    Promotion uses lifetime access rate, with threshold $0.05$ and cold/hot promotion cache capacities of $8{,}000/16{,}000$ entries.
    \item \textbf{Sliding-Window \dsname}: the same two-tier configuration, but promotion uses only recent accesses in a $1{,}000$-block rolling window. 
    It uses a promotion-rate threshold of 0.05 and a 100-block lazy-demotion delay.
    \item \textbf{Dynamic-Control \dsname}: starts from Ratio-Based and uses feedback to adjust hot-tier capacity and promotion threshold. 
    The controller uses $M=2{,}500$-block threshold-control windows, evaluates capacity every $K=3$ windows, targets occupancy band $[0.75,0.95]$, uses rejection threshold $q_{\mathrm{high}}=0.3$, lowers thresholds after $U=3$ low-occupancy windows, and clamps thresholds with $\theta_i^{\min}=0.05$ and tail-derived $\theta_i^{\max}$. 
\end{itemize}
Each \dsname configuration rebuilds its HuffMHT every 500 blocks and uses LFU Promotion Cache buckets of range 10.
\Cref{sec:additional-policy-replay} reports two additional HMT variants.
\Cref{sec:underlying-ads-choice} evaluates alternative underlying ADS choices under the same replay setup.

\paragraph{Metrics}

We report two per-block metrics. 
Different ADSs in our evaluation store different amounts of information per node. 
When they traverse their respective data structures to generate proofs, verify proof paths, or update the data structure, the computation cost is determined by the amount of data that is input to hashing.
First, hashed input bytes measures the total number of bytes hashed during a block commit. 
We compute this quantity per authenticated node, using the byte string included in that node's hash input: MPT contributes trie-node bytes; Merkle and HMT leaves contribute a domain separator, key, and value payload, while internal nodes contribute a domain separator and child hashes; UBT contributes the inputs to its stem and internal-node hashes. 
For \dsname, we sum this quantity across tiers.  
Weighted average proof size measures the inclusion-proof size for keys accessed in a block, weighted by each key's access count in that block.
\Cref{sec:read-write-proof-size} records this average per-block separately for read and write accesses.

\begin{figure*}
    \centering
    \includegraphics[width=0.48\columnwidth]{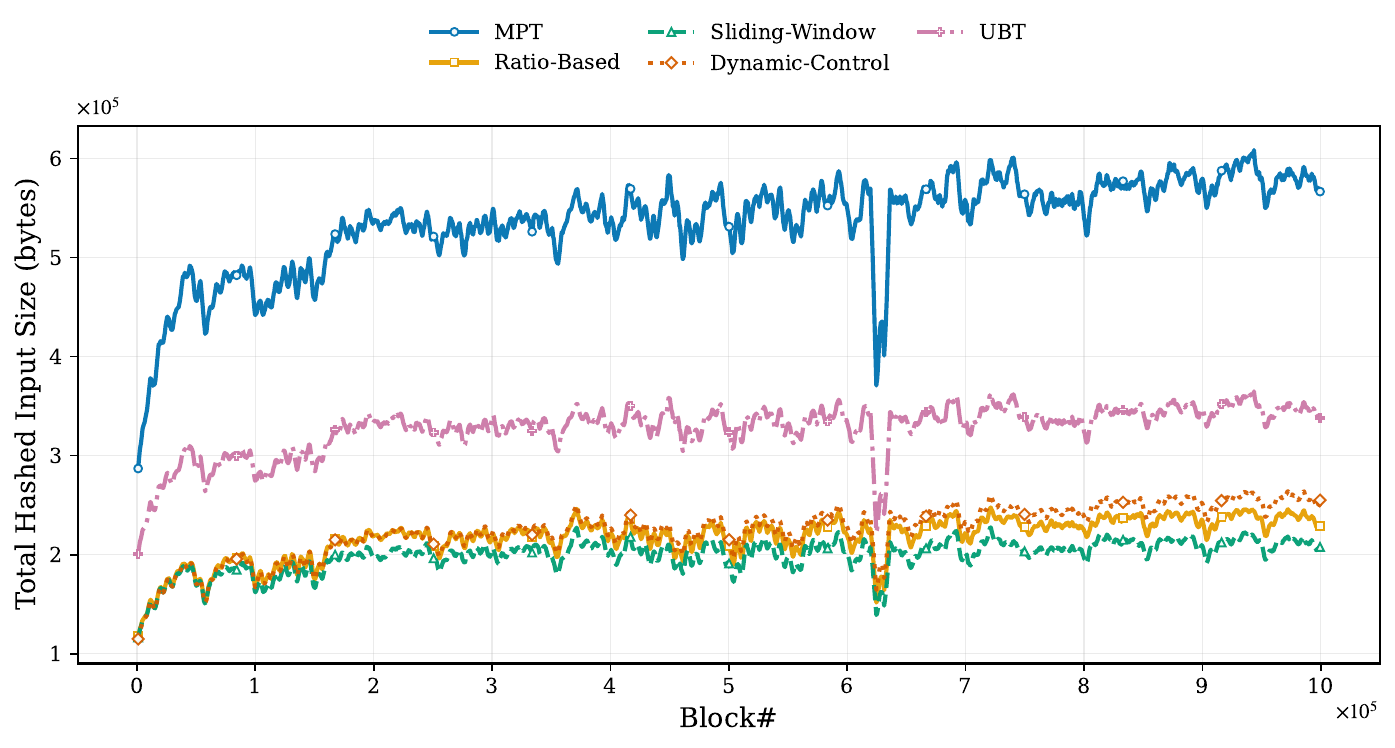}
    \hfill
    \includegraphics[width=0.48\columnwidth]{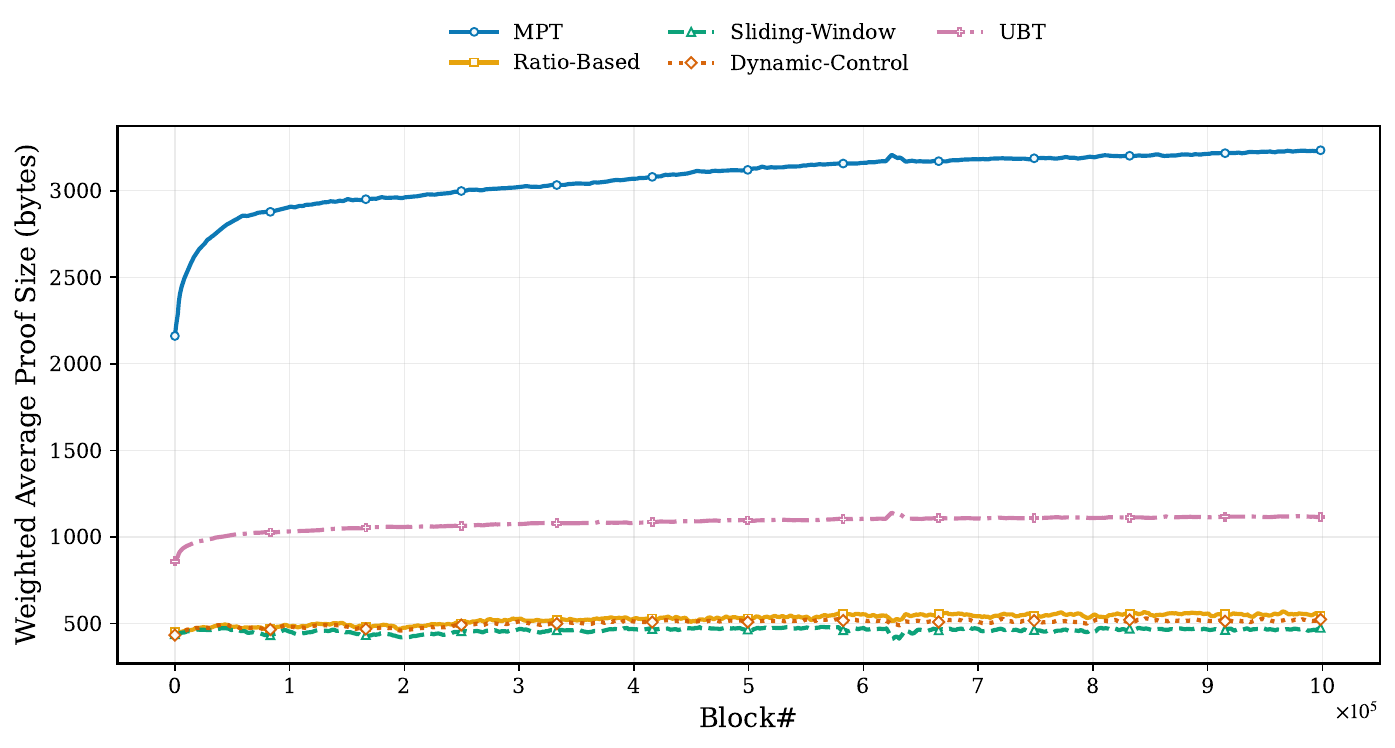}
    \vspace{-0.4em}
    \par
    \makebox[0.48\columnwidth][c]{\small\phantomsection\label{fig:eth_replay_hash_input_bytes:a}(a)}
    \hfill
    \makebox[0.48\columnwidth][c]{\small\phantomsection\label{fig:eth_replay_hash_input_bytes:b}(b)}
    \captionDesc{Performance comparisons of MPT, UBT, and HMT instantiations with various policies. The evaluation (fully described in \cref{sec:ethereum-replay-evaluation}) uses one million Ethereum blocks. Fig.~(a) shows the total bytes hashed during the processing of each block, and Fig.~(b) shows the weighted average inclusion proof size.}
    \label{fig:eth_replay_hash_input_bytes}
\end{figure*}

\paragraph{Testbed}
We run all experiments on a Dell PowerEdge R7715 server with a 16-core AMD EPYC 9135 CPU and 32~GB of DRAM.
Our Go implementation is compiled with Go 1.25.4.

\subsubsection{Evaluation Results}
\paragraph{Hash Input Bytes}
\Cref{fig:eth_replay_hash_input_bytes}\hyperref[fig:eth_replay_hash_input_bytes:a]{(a)} shows the per-block hashed input bytes across the one-million-block replay.  
The x-axis reports the replay offset from block 17{,}000{,}000, and the y-axis reports the total number of bytes input to hash functions during the commit for that block.
The visible dip around replay offset $6.2\times 10^5$ is workload-driven rather than specific to one ADS. 
A 10{,}000-block summary of the trace shows that the average number of distinct updated account keys drops from about $68$ per block to about $48$ per block in this interval, even though total operation volume does not drop proportionally.
Since repeated writes to the same account coalesce into the same dirty path at commit time, all ADSs perform less rehashing in this region.
MPT has the largest byte-level hash workload throughout, while UBT remains the second most expensive method.
The \dsname variants consistently outperform both: Ratio-Based and Dynamic-Control stay close to one another, and Sliding-Window is the lowest-cost method for most blocks.

\paragraph{Proof Size}
\Cref{fig:eth_replay_hash_input_bytes}\hyperref[fig:eth_replay_hash_input_bytes:b]{(b)} shows the weighted average proof size for the accounts accessed in each block. 
The x-axis reports the replay offset from block 17{,}000{,}000, and the y-axis reports the inclusion-proof size averaged over the keys accessed in that block, weighted by the number of times each key appears.
MPT has the largest average proof size, around several kilobytes per accessed account. 
UBT reduces this cost by using a binary authenticated layout, but its proofs remain roughly twice as large as the \dsname variants. 
The \dsname variants keep frequently accessed accounts in the hot HMT tier, so the access-weighted average is dominated by short hot tier paths rather than by cold tier paths. 
Sliding-Window is lowest or tied for lowest over most of the replay.

\paragraph{Policy Tradeoffs}
Across the hashed-input and proof-size figures, the three \dsname policies expose different tradeoffs.  
Ratio-Based is simple and stable, but lifetime frequency can keep older hot accounts influential after the workload shifts. 
Sliding-Window reacts to recent locality and is the best policy in these figures for both hashed input bytes and weighted proof size.  
Dynamic-Control remains competitive in proof size but consumes more hashed input bytes than the static \dsname policies in this run. 
We therefore treat Dynamic-Control as an exploratory adaptive policy rather than the primary winning configuration.

\section{Additional Related Literature}
\label{sec:related}
We now go over additional relevant work.

\paragraph{Blockchain State Commitments}
Real-world performance considerations impose constraints on ADS-based solutions for blockchain state commitments: proof-sizes should be compact, and reading and updating operations should be performant.
Ethereum's MPT has large proofs, and its performance is noted as hampering scaling~\cite{wood2025ethereum}. 
Proposals to use Verkle Trees (VTs) for blockchain state were motivated by stateless validation, where vector-commitment-based witnesses can be much smaller than MPT branches~\cite{ballet2024eip7748,buterin2024verge}.
More recent Ethereum state-tree discussions keep the stateless-verification goal but make the choice of commitment structure more open-ended, emphasizing post-quantum security and compatibility with SNARK/STARK-based validity proofs; these considerations motivate a unified binary tree rather than VTs, motivating our focus on tree-based structures~\cite{buterin2024verge,buterin2025eip7864}.

Other commitment schemes target stateless validation and efficient proof updates using accumulators, key-value commitments, and vector commitments~\cite{boneh2019batching,agrawal2020kvac,tas2023vector}. 
Blockchain-specific ADSs have also optimized for query expressiveness or tree height. 
GEM$^2$-tree supports authenticated range queries in hybrid-storage blockchains while reducing smart-contract maintenance gas through merge-oriented authenticated structures~\cite{zhang2019gem}. 
MHOT replaces fixed-prefix trie traversal with discriminative-bit indexing and hierarchical Merkle proofs to reduce height, proof size, and vulnerability to Nurgle-style prefix attacks while remaining hash-based~\cite{xie2026mhot}. 
Storage-oriented authenticated-state systems such as mLSM, RainBlock/DSM-Tree, Jellyfish Merkle Tree, LVMT, QMDB, and AlDBaran instead focus on improving storage locality, reducing I/O, lowering update cost, or simplifying state management~\cite{raju2018mlsm,ponnapalli2021rainblock,GaoJellyfishMT,li2024lvmt,zhang2025qmdb,kauer2025aldbaran}. 
One delayed-write state-tree design buffers recently updated state in a sliding-window authenticated cache before committing those updates to the global state tree, reducing state-root latency while remaining compatible with MPT and VT~\cite{sheng2026swmt}.

In contrast to previous work, HMT does not change the commitment primitive itself (unlike, e.g., Verkle and vector-commitment proposals).
Instead, it targets the data-structure's layout rather than primarily the backing store or I/O path (unlike storage-oriented designs, for example), and does so by adapting short authenticated paths to keys that are hot under the observed workload while preserving a hash-based state commitment (unlike height-optimized or gas-aware state trees).

\paragraph{Frequency-aware Trees}
Some tree-based ADSs opt for skewed constructions~\cite{karpinski2004note} suitable to various settings, e.g., certificate revocation~\cite{elwailly2004quasimodo} and memory integrity checking~\cite{szefer2014towards}.
As we show, of particular relevance to blockchains is a specific design where frequently accessed elements are positioned closer to the root, because this can minimize the overhead of updating hot keys and the overall expected proof length.
A natural approach for constructing such trees is to use Huffman coding, which is optimal in some cases~\cite{moffat2019huffman} and can obtain a substantial improvement over vanilla MTs for certain access distributions.
However, Huffman optimality is achieved only when element accesses follow a static distribution and are independent and identically distributed, while blockchains are dynamic and may also feature intricate access patterns (e.g., it is reasonable to expect that some decentralized exchanges are more likely to be accessed in tandem than others).
Indeed, devising performant ADSs for the case where access distributions are dynamic is highlighted as a promising future direction by previous work~\cite{szefer2014towards}, thus motivating our study.

\paragraph{Other Authenticated Data Structures}
Authenticated dictionaries and hash tables provide the basic abstraction of committing to a mutable map and proving query answers~\cite{crosby2011authenticated,papamanthou2008authenticated,miller2014authenticated}. 
A recent systematization of authenticated dictionaries unifies their deployment models, security definitions, and construction families, and highlights a persistent lookup/update tradeoff in known schemes~\cite{malvai2026sok}.
Complementary theoretical work formalizes black-box relationships between authenticated dictionaries, authenticated sets, and memory checkers, showing that accumulator-based black-box constructions incur inherent lookup and update overheads~\cite{falzon2025blackbox}. 
Recent memory-checking work studies locality-optimized layouts for authenticated storage~\cite{wang2023locality}.

\paragraph{Other Frequency-aware ADSs}
Frequency-aware authenticated dictionaries have used non-uniform query models and skewed layouts to reduce average proof size under Zipf-like workloads~\cite{atighehchi2015new}. 
A related line of work in oblivious maps separates hot and cold records to improve hot-query response time, either while preserving full obliviousness or by exposing limited tier-membership leakage~\cite{li2026tieredomap}.
Prior frequency-aware authenticated dictionaries and skewness-aware OMAPs therefore support the broader premise that access skew is useful, but they do not provide the blockchain-oriented dynamic authenticated-map design we need: frequent root commits, insertions and deletions, and deterministic layout.

\paragraph{Evaluating Dynamic Data Structures}
A common approach to evaluating dynamic data structure algorithm performance (e.g., insertion of a new key, or modifying an existing key's value) considers an online setting \cite{sleator1985amortized, awerbuch1992competitive, aspnes1998competitive}.
Requests arrive over time, and the examined algorithm must complete them without having any reliable forecast of future workload \cite{hajek2001competitiveness, gafni2022greedy, gafni2024scheduling}.
One benchmark used by the theoretic literature is the \emph{competitive ratio}, i.e., the performance of a given online algorithm facing adversarially-generated requests is compared to an optimal offline one which has full knowledge of the future sequence of requests \cite{borodin2005online}.
While this type of analysis adopts a worst-case perspective, our focus is on optimizing real-world average-case performance.

\section{Conclusion}
\label{sec:conclusion}
This paper addresses the cost of authenticated state access in dynamic, skewed workloads by proposing \dsname, a frequency-aware ADS. 
The key idea is to decouple authentication from layout optimization, where cold elements remain in a conventional authenticated tier, frequently accessed elements move into a periodically rebuilt HuffMHT, and newly promoted hot elements are handled through an overflow tree until the next rebuild. 
This tiered design exploits access skew by giving hot elements shorter authenticated paths, while batching layout changes so that the tree does not need to be reshaped after every access.

Experiments on real-world data show that this design improves several metrics. 
The best-performing policy, Sliding-Window, uses about $2.4\times$ less average hash input than MPT ,and $0.34\times$ less than UBT. 
Its access-weighted proofs are also smaller by $0.18\times$ than MPT and $0.55\times$ than UBT.
The microbenchmarks support the same design choice at a smaller scale: batched pair-swap maintenance reduces layout restructuring time when the key set is stable, and periodic rebuilds avoid the high cost of updating after every operation under dynamic insertions.

\printbibliography[heading=bibintoc]

\clearpage

\appendix

\section{Algorithms}
\label[appendix]{sec:algorithms}

\begin{algorithm}
\captionDesc{Merkle Tree Inclusion Proof}\label{alg:mt-proof}
\SetAlgoLined
\SetAlgoVlined
\SetKwInput{KwInput}{Input}
\SetKwInput{KwOutput}{Output}
\KwInput{\\\quad
$\mathcal{T}$: Binary Merkle tree.
\\\quad
$c$: Trusted root commitment.
\\\quad
$v$: Claimed leaf value.
}
\KwOutput{A Merkle proof $\pi$, or an accept/reject bit.}

\BlankLine
\SetKwProg{Fn}{Procedure}{:}{end}

\Fn{\textsc{GetProof}($\mathcal{T},k$)}{
  $\pi \leftarrow [\,]$; \tcp{Initialize new list}
  $u \leftarrow$ leaf storing $(k,v)$\;
  \eIf{$(k,v) \in u$}{
    Append $(k,v)$ to $\pi$\;
  }{
    \Return $\bot$
  }
  \While{$u$ is not the root}{
    Append $(\mathsf{side}(u),\mathsf{hash}(\mathsf{sibling}(u)))$ to $\pi$\;
    $u \leftarrow$ parent of $u$\;
  }
  \Return $\pi$\;
}

\Fn{\textsc{Verify}($c,k, v,\pi$)}{
  $h \leftarrow H((k,v))$\;
  \ForEach{$(b,s)\in\pi$ from leaf to root}{
    \eIf{$b=\textsf{left}$}{
      $h \leftarrow H(h\,\|\,s)$\;
    }{
      $h \leftarrow H(s\,\|\,h)$\;
    }
  }
  \Return $h=c$\;
}
\end{algorithm}

\begin{algorithm}
\captionDesc{MPT Inclusion Proof}\label{alg:mpt-proof}
\SetAlgoLined
\SetAlgoVlined
\SetKwInput{KwInput}{Input}
\SetKwInput{KwOutput}{Output}
\KwInput{\\\quad
$\mathcal{T}$: Merkle Patricia Trie.
\\\quad
$c$: Trusted root hash.
\\\quad
$k$: Queried key.
\\\quad
$v$: Claimed value.
}
\KwOutput{An MPT proof $\pi$, or an accept/reject bit.}

\BlankLine
\SetKwProg{Fn}{Procedure}{:}{end}

\Fn{\textsc{GetProof}($\mathcal{T},k$)}{
  $p \leftarrow \textsc{Nibbles}(k)$\;
  $u \leftarrow$ root node of $\mathcal{T}$\;
  $\pi\leftarrow[\,]$; \tcp{Initialize new list}
  \While{$u$ is on the search path for $p$}{
    Append $\textsc{Encode}(u)$ to $\pi$\;
    $u \leftarrow$ next trie node selected by the remaining nibbles of $p$\;
  }
  \Return $\pi$\;
}

\Fn{\textsc{Verify}($c,k,v,\pi$)}{
  $p \leftarrow \textsc{Nibbles}(k)$\;
  $r \leftarrow c$\;
  \ForEach{encoded node $e\in\pi$ from root to leaf}{
    \If{$H(e)\ne r$}{\Return 0 \;}
    $u \leftarrow \textsc{Decode}(e)$\;
    \uIf{$u$ contains the value for the remaining path $p$}{
      $\hat v \leftarrow$ value stored at $u$\;
      \textbf{break}\;
    }
    \Else{
      $(r,p) \leftarrow$ next child commitment and remaining path selected by $p$\;
    }
  }
  \Return $\hat v=v$\;
}

\Fn{\textsc{Nibbles}($k=b_1\cdots b_m$)}{
\tcp{Each $b_i$ denotes one byte of the key.}
  \Return $\left(\left\lfloor \frac{b_1}{16}\right\rfloor, b_1\bmod 16,\dots,
  \left\lfloor \frac{b_m}{16}\right\rfloor, b_m\bmod 16\right)$\;
}
\end{algorithm}
\FloatBarrier

\begin{algorithm}
\captionDesc{\dsname}\label{alg:hmt-ops}
\SetAlgoLined
\SetAlgoVlined
\SetKwInput{KwState}{State}
\KwState{
\\\quad
$\mathcal{A}_i$: ADS storing tier $i$.
\\\quad
$P_i$: tier-$i$ Promotion Cache.
\\\quad
$r_i$: tier-$i$ root.
}

\BlankLine
\SetKwProg{Fn}{Procedure}{:}{end}

\Fn{\textsc{Put}($k,v$)}{
$i \leftarrow \textsc{Locate}(k)$\;
$\mathcal{A}_i.\textsc{Put}(k,v)$\;
\textsc{InsertInTier}$(i,k)$\;
}

\Fn{\textsc{Delete}($k$)}{
$i \leftarrow \textsc{Locate}(k)$\;
\If{$\mathcal{A}_i.\textsc{Get}(k)=\bot$}{\Return\;}
$\mathcal{A}_i.\textsc{Delete}(k)$\;
\textsc{Remove}$(k)$\;
}

\Fn{\textsc{Commit}()}{
\For{$i \leftarrow 0$ \KwTo $T-1$}{
  $r_i \leftarrow \mathcal{A}_i.\textsc{Commit}()$\;
}
$c \leftarrow H(r_0\,\|\,\cdots\,\|\,r_{T-1})$\;
\Return $c$\;
}

\Fn{\textsc{Get}($k$)}{
$i \leftarrow \textsc{Locate}(k)$\;
$v \leftarrow \mathcal{A}_i.\textsc{Get}(k)$\;
\If{$v=\bot$}{\Return $\bot$\;}
\eIf{$k$ is cached in $P_i$}{
  \textsc{TouchInTier}$(i,k)$\;
}{
  \textsc{InsertInTier}$(0,k)$\;
}
\Return $v$\;
}

\Fn{\textsc{GetProof}($k$)}{
$i \leftarrow \textsc{Locate}(k)$\;
$\pi_i \leftarrow \mathcal{A}_i.\textsc{GetProof}(k)$\;
\If{$\pi_i=\bot$}{\Return $\bot$\;}
$\Pi \leftarrow (i,\pi_i,(r_0,\dots,r_{T-1}))$\;
\Return $\Pi$\;
}

\Fn{\textsc{Verify}($c,k,v,\Pi$)}{
Parse $\Pi=(i,\pi_i,(r_0,\dots,r_{T-1}))$\;
\If{$\mathcal{A}_i.\textsc{Verify}(r_i,k,v,\pi_i)=0$ or $H(r_0\,\|\,\cdots\,\|\,r_{T-1}) \ne c$}{\Return $0$\;}
\Return $1$\;
}
\end{algorithm}

\begin{algorithm}
\captionDesc{Bucketed LFU Promotion Cache}\label{alg:lfu}
\SetAlgoLined
\SetAlgoVlined
\SetKwInput{KwConfig}{Configuration}
\SetKwInput{KwState}{State}
\KwConfig{\\\quad
$T$: Number of tiers.
\\\quad
$R$: Frequency-bucket span.
\\\quad
$C_i$: capacity of tier-$i$ metadata cache.
}
\KwState{
$\mathcal{S}$: Shared CMS.
\\\quad
$M_i$: Key-to-node hash map for tier $i$.
\\\quad
$B_i$: Doubly-linked list of tier $i$'s frequency buckets.
}

\BlankLine
\SetKwProg{Fn}{Procedure}{:}{end}

\Fn{\textsc{Locate}($k$)}{
\For{$i \leftarrow 0$ \KwTo $T-1$}{
  \If{$k \in M_i$}{
    \Return $i$\;
  }
}
\Return $0$ as base-tier fallback\;
}

\Fn{\textsc{Delete}($k$)}{
$i \leftarrow \textsc{Locate}(k)$\;
\If{$i=0 \land k\notin M_0$}{\Return not cached\;}
Delete $M_i[k]$ from $B_i$; delete $M_i[k]$\;
}

\Fn{\textsc{TouchInTier}($i,k$)}{
$n \leftarrow M_i[k]$\;
update $\hat f$ of $k$ in $\mathcal{S}$\;
\textsc{RelocateIfNeeded}$(i,n,\hat f)$\;
}

\Fn{\textsc{RelocateIfNeeded}($i,n,\hat f$)}{
\If{$\hat f$ is inside $n$'s current bucket range}{
  move $n$ to the front of that bucket list\;
}
\Else{
  Remove $n$ from current bucket\;
  $b \leftarrow$ find-or-create bucket in $B_i$ for range of $\hat f$ with span $R$\;
  Insert $n$ at front of $b$\;
}
}

\Fn{\textsc{InsertInTier}($i,k$)}{
\If{$k \in M_i$}{
  \textsc{TouchInTier}$(i,k)$\;
  \Return\;
}
Update $\hat f$ of $k$ in $\mathcal{S}$\;
$b \leftarrow$ find-or-create bucket in $B_i$ for range of $\hat f$ with span $R$\;
Create node $n=(k,b)$, insert at front of $b$\;
$M_i[k] \leftarrow n$\;
\If{$|M_i|>C_i$}{Evict tail of minimum-frequency bucket in $B_i$\;}
}

\Fn{\textsc{MoveTier}($k,i,j$)}{
Remove $k$ from $M_i$ and its bucket in $B_i$\;
\textsc{InsertInTier}$(j,k)$\;
}
\end{algorithm}

\begin{algorithm}
\captionDesc{Absolute-Threshold Migration}\label{alg:absolute-threshold}
\SetAlgoLined
\SetAlgoVlined
\SetKwInput{KwParam}{Parameters}
\SetKwInput{KwState}{State}
\KwParam{
\\\quad
$T$: Number of tiers.
\\\quad
$C_i$: Promotion Cache capacity for tier $i$.
\\\quad
$\theta_i$: Absolute promotion threshold for edge $i$.
}
\KwState{
\\\quad
$\mathcal{A}_i$: ADS storing tier $i$.
\\\quad
$P_i$: Promotion Cache for tier $i$.
}

\BlankLine
\tcp{Process each promotion edge}
\For{$i \leftarrow 0$ \KwTo $T-2$}{
  $x \leftarrow \textsc{MostFrequent}(P_i)$\;
  \tcp{1. Fill spare capacity in the hotter tier}
  \While{$x\ne\bot \land |P_{i+1}| < C_{i+1} \land \hat f(x)>\theta_i$}{
    Promote $x$ from $\mathcal{A}_i$ to $\mathcal{A}_{i+1}$\;
    \textsc{MoveTier}$(x,i,i+1)$\;
    $x \leftarrow \textsc{MostFrequent}(P_i)$\;
  }
  \tcp{2. If full, admit only if hotter than tail}
  \If{$|P_{i+1}| = C_{i+1}$}{
    $z \leftarrow \textsc{LeastFrequent}(P_{i+1})$\;
    \If{$x\ne\bot \land z\ne\bot \land \hat f(x)>\theta_i \land \hat f(x)>\hat f(z)$}{
      Promote $x$ from $\mathcal{A}_i$ to $\mathcal{A}_{i+1}$\;
      Demote $z$ from $\mathcal{A}_{i+1}$ to $\mathcal{A}_i$\;
      \textsc{MoveTier}$(x,i,i+1)$\;
      \textsc{MoveTier}$(z,i+1,i)$\;
    }
  }
}
\end{algorithm}

\begin{algorithm}
\captionDesc{Sliding-Window Migration}\label{alg:sliding-window}
\SetAlgoLined
\SetAlgoVlined
\SetKwInput{KwInput}{Input}
\SetKwInput{KwParam}{Parameters}
\SetKwInput{KwState}{State}
\KwInput{\\\quad
$B$: Access histogram for the completed period.
}
\KwParam{\\\quad
$T$: Number of tiers.
\\\quad
$\theta_i$: Window-rate threshold for edge $i$.
\\\quad
$W$: Sliding-window length, in periods.
\\\quad
$\Delta_i$: Delayed-demotion interval for edge $i$.
}
\KwState{\\\quad
$\mathcal{W}$: Ring of recent period histograms.
\\\quad
$H_{\mathcal{W}}$: Aggregate histogram over $\mathcal{W}$.
\\\quad
$\mathcal{Q}_i$: Demotion queue for edge $i$.
\\\quad
$\mathcal{A}_i$: ADS storing tier $i$.
}

\BlankLine
\tcp{1. Slide the window}
$E \leftarrow$ oldest period histogram in $\mathcal{W}$\;
$\mathcal{X}^{-} \leftarrow \{x\mid E[x]>0\}$\;
$\mathcal{X}^{+} \leftarrow \{x\mid B[x]>0\}$\;
$H_{\mathcal{W}} \leftarrow H_{\mathcal{W}}-E+B$\;
$\mathcal{W} \leftarrow \textsc{AppendNewest}(\mathcal{W}\setminus E,B)$\;
\BlankLine

\tcp{2. Recheck delayed demotions}
\For{$i \leftarrow T-2,T-3,\dots,0$}{
    \ForEach{expired candidate $x\in \mathcal{Q}_i$}{
      Dequeue $x$ from $\mathcal{Q}_i$\;
      \If{$\textsc{Locate}(x)=i+1 \land s_W(x)<\theta_i$}{
        Demote $x$ from $\mathcal{A}_{i+1}$ to $\mathcal{A}_i$\;
        \textsc{MoveTier}$(x,i+1,i)$\;
      }
    }
}
\BlankLine

\tcp{3. Schedule newly cold hot tier keys}
\ForEach{$x\in \mathcal{X}^{-}$}{
    $j \leftarrow \textsc{Locate}(x)$\;
    \If{$j>0$}{
      $i \leftarrow j-1$\;
      \If{$s_W(x)<\theta_i$}{
        \If{$x$ is not scheduled in $\mathcal{Q}_i$}{
          Enqueue $x$ in $\mathcal{Q}_i$ for time $+\Delta_i$\;
        }
      }
    }
}
\BlankLine

\tcp{4. Promote newly active qualifying keys}
\ForEach{$x\in \mathcal{X}^{+}$}{
    $i \leftarrow \textsc{Locate}(x)$\;
    \If{$i<T-1$}{
      \If{$s_W(x)\ge\theta_i$}{
        Promote $x$ from $\mathcal{A}_i$ to $\mathcal{A}_{i+1}$\;
        \textsc{MoveTier}$(x,i,i+1)$\;
      }
    }
}
\end{algorithm}

\FloatBarrier

\begin{algorithm}
\captionDesc{Dynamic-Control Controller}\label{alg:dynamic-control}
\SetAlgoLined
\SetAlgoVlined
\SetKwInput{KwInput}{Input}
\SetKwInput{KwParam}{Parameters}
\SetKwInput{KwState}{State}
\KwInput{\\\quad
$b$: Current period index.
}
\KwParam{\\\quad
$T$: Number of tiers.
\\\quad
$q_{\mathrm{high}}$: Rejection threshold.
\\\quad
$K$: Capacity-control interval.
\\\quad
$\theta_i^{\min}$: Lower bound for edge $i$.
\\\quad
$M$: Threshold-control interval.
\\\quad
$\theta_i^{\max}$: Tail-derived upper bound.
\\\quad
$U$: Low-occupancy streak length.
\\\quad
$\rho_{\mathrm{low}},\rho_{\mathrm{high}}$: Target occupancy band.
}
\KwState{\\\quad
$n_i$: Hot tier size.
\\\quad
$C_i$: Hot tier capacity.
\\\quad
$\theta_i$: Current threshold.
\\\quad
$A_i,R_i$: Attempts and rejects.
\\\quad
$u_i$: Low-occupancy streak.
}

\BlankLine
\tcp{Current control-window index}
\If{$b \bmod M \ne 0$}{
  \Return\;
}
$w_{\mathrm{ctrl}} \leftarrow \frac{b}{M}$\;
\BlankLine

\tcp{1. Threshold control}
\For{$i \leftarrow 0$ \KwTo $T-2$}{
  $\rho_i \leftarrow \frac{n_i}{C_i}$\;
  $q_i \leftarrow \frac{R_i}{\max(A_i,1)}$\;
  \If{$\rho_i>\rho_{\mathrm{high}} \land q_i>q_{\mathrm{high}}$}{
    Raise $\theta_i$, capped by $\theta_i^{\max}$\;
  }
  Update $u_i$: increment if $\rho_i<\rho_{\mathrm{low}}$, otherwise reset\;
  \If{$u_i\ge U$}{
    Lower $\theta_i$, floored by $\theta_i^{\min}$\;
  }
  Reset $A_i$ and $R_i$\;
}
\BlankLine

\If{$w_{\mathrm{ctrl}} \bmod K \ne 0$}{
  \Return\;
}
\BlankLine

\tcp{2. Capacity control}
\For{$i \leftarrow 0$ \KwTo $T-2$}{
  $\rho_i \leftarrow \frac{n_i}{C_i}$\;
  \If{$\rho_i>\rho_{\mathrm{high}} \land \theta_i$ is near $\theta_i^{\max}$}{
    Grow $C_i$\;
  }
  \If{$\rho_i<\rho_{\mathrm{low}} \land \theta_i$ is near $\theta_i^{\min}$}{
    Shrink $C_i$\;
  }
}
\end{algorithm}

\section{Proofs}
\label[appendix]{sec:proofs}
\resZipfAverageDepthRatio*
\begin{proof}
In a HuffMHT, an item's depth equals its Huffman code length.
For a Huffman code over distribution $P_N$ and given the base-2 entropy $H_2(P_N)=\sum_{i=1}^{N}p_i\log_2\frac{1}{p_i}$ the expected length is bounded by $\bar d_{\mathrm{Huff}}(N) < H_2(P_N)+1$.
Substituting $p_i=\frac{(i+b)^{-a}}{Z_N}$ gives $\frac{1}{p_i}=Z_N(i+b)^a$.
Hence,
\begin{align*}
H_2(P_N)
&=
\sum_{i=1}^{N}p_i
\log_2\!\left(Z_N(i+b)^a\right)\\
&=
\sum_{i=1}^{N}p_i
\left(\log_2 Z_N+a\log_2(i+b)\right)\\
&=
\log_2 Z_N\sum_{i=1}^{N}p_i
+
a\sum_{i=1}^{N}p_i\log_2(i+b)\\
&=
\log_2 Z_N
+
\frac{a}{Z_N}
\sum_{i=1}^{N}(i+b)^{-a}\log_2(i+b).
\end{align*}

We now show that the entropy is bounded independently of $N$.
As $a>1$, then $A_{a,b}\triangleq \sum_{i=1}^{\infty}(i+b)^{-a}<\infty$.
So, $Z_N=\sum_{i=1}^{N}(i+b)^{-a}\le A_{a,b}$ and also $Z_N\ge (1+b)^{-a}$, which, when taken together, imply $\frac{1}{Z_N}\le (1+b)^a$.
Next, since $a>1$, the logarithmic weighted tail is also summable:
$$
B_{a,b}
\triangleq
\sum_{i=1}^{\infty}(i+b)^{-a}\log_2(i+b)
<
\infty
.
$$
Thus, for every $N$,
$$
\sum_{i=1}^{N}(i+b)^{-a}\log_2(i+b)
\le
B_{a,b}
.
$$
By combining these bounds we obtain the following:
\begin{align*}
H_2(P_N)
&=
\log_2 Z_N
+
\frac{a}{Z_N}
\sum_{i=1}^{N}(i+b)^{-a}\log_2(i+b)
\\&
\le
\max\{0,\log_2 A_{a,b}\}
+
a(1+b)^a B_{a,b}
.
\end{align*}
Denote $C_{a,b} \define \max\{0,\log_2 A_{a,b}\}+a(1+b)^a B_{a,b}$, then $H_2(P_N)\le C_{a,b}$, and therefore:
$$
\bar d_{\mathrm{Huff}}(N)
<
H_2(P_N)+1
\le
C_{a,b}+1
=
O(1)
.
$$
Let $q\ge2$ be fixed.
A balanced $q$-ary MT with $N$ leaves has proof depth
$$
\bar d_{\mathrm{MT_q}}(N)
=
\lceil\log_q N\rceil
=
\Theta(\log N)
,
$$
where the hidden constants may depend on $q$ but not on $N$.
Thus,
$$
\frac{\bar d_{\mathrm{Huff}}(N)}
{\bar d_{\mathrm{MT_q}}(N)}
=
\frac{O(1)}{\Theta(\log N)}
=
O\!\left(\frac{1}{\log N}\right)
.
$$
If an MPT baseline satisfies $\bar d_{\mathrm{MPT}}(N)=\Omega(\log N)$ over the considered key-space family, then the same numerator bound gives
$$
\frac{\bar d_{\mathrm{Huff}}(N)}
{\bar d_{\mathrm{MPT}}(N)}
=
\frac{O(1)}{\Omega(\log N)}
=
O\!\left(\frac{1}{\log N}\right)
.
$$
\end{proof}

\resAdaptiveHuffmanCost*
\begin{proof}
Consider one adaptive Huffman update.
The update changes the accessed leaf weight, or splits the NYT leaf when inserting a new element, and may then perform swaps or slide-and-increment moves.
Let $A$ be the set of nodes whose authenticated leaf payload or ordered child relation changes during this process.
Only nodes on paths from nodes in $A$ to the root can receive different hash inputs.
Every node outside
\[
D=\bigcup_{v\in A}\mathsf{path}(v)
\]
has the same ordered children and the same descendant values before and after the update, so its hash remains valid.
Recomputing the nodes in $D$ once in bottom-up order therefore restores the authenticated root, and costs $O(|D|)$ hash computations.
If the update performs $s$ local structural moves, then the initial leaf update or NYT split contributes at most one affected path, and each move contributes only a constant number of additional affected paths.
Each such path has length at most $h$, so $|D|\le O((s+1)h)$.
On the other hand, $D$ is a subset of the nodes of the tree.
A binary tree with $n$ leaves has at most $2n-1$ nodes, and therefore $|D|\le O(n)$.
Combining these two bounds gives $|D|\le O(\min\{(s+1)h,n\})$.
The worst-case authenticated maintenance cost is therefore $O(n)$.
This is tight up to constants because an update or insertion at a leaf of depth $h$ dirties the full leaf-to-root path, and an adaptive Huffman tree can have height $h=n-1$.
\end{proof}

\section{Adaptive Huffman Replay Stress Test}
\label[appendix]{sec:adaptive-huffman-replay-stress}

We include a short real-trace stress test against adaptive Huffman to isolate the cost of online Huffman-tree reshaping.
The replay starts from the same Ethereum block as \cref{sec:ethereum-replay-evaluation} and processes the next 10{,}000 blocks.
We run the same account-level operation stream through over a balanced MT, a periodic-rebuild HuffMHT that rebuilds once every 500 blocks, and an adaptive Huffman tree that updates its layout after each access or insertion.

\Cref{fig:adaptive_huffman_stress_throughput} shows that the per-operation tree maintenance in adaptive Huffman dominates its replay cost.
The balanced MT processes 1.14M operations per second, and periodic-rebuild HuffMHT processes 667K operations per second.
By contrast, Adaptive Huffman processes only 2.3K operations per second, despite producing the smallest weighted proofs in \cref{fig:adaptive_huffman_stress_proof_size}.
Periodic-rebuild HuffMHT therefore captures much of the proof-size benefit of a Huffman-shaped layout, while avoiding the cost of reshaping and rehashing the authenticated tree after every operation.

\begin{figure}
    \centering
    \begin{minipage}{0.48\columnwidth}
        \centering
    \includegraphics[width=\linewidth]{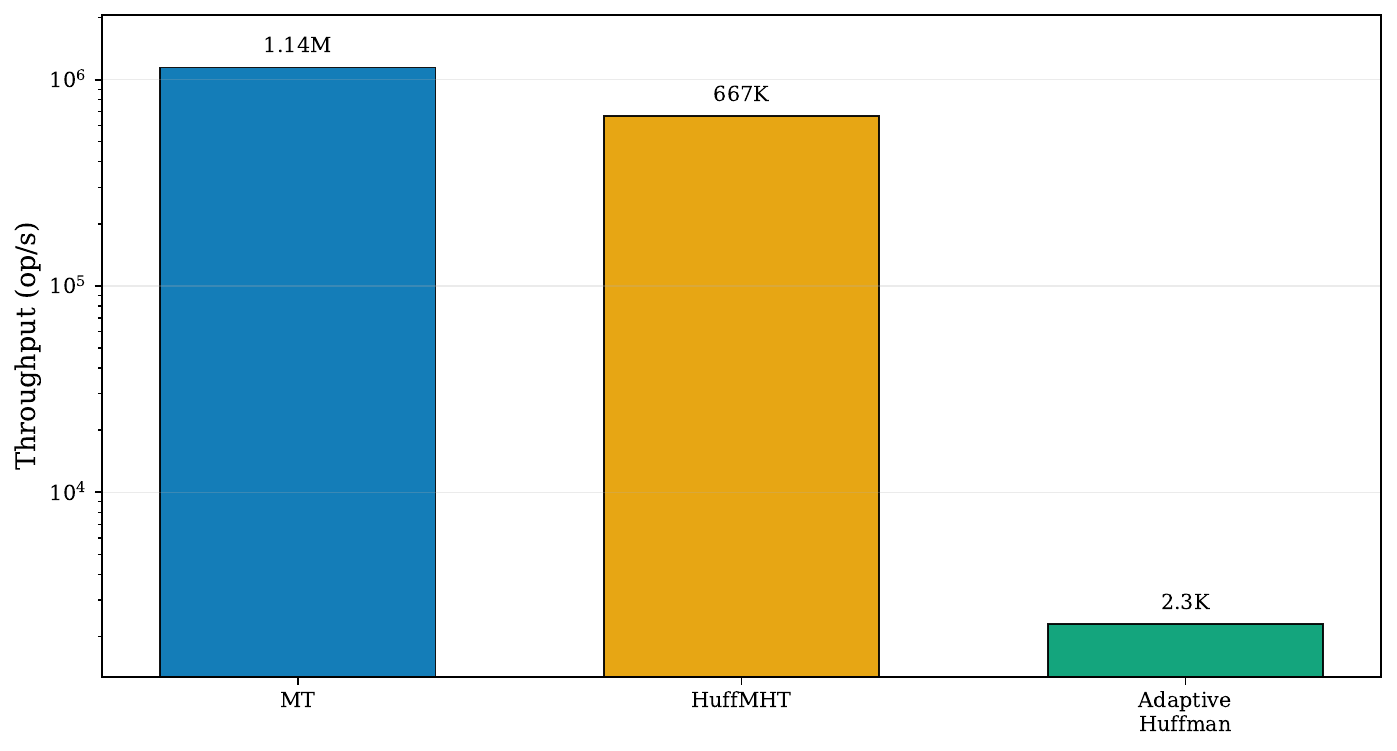}
    \captionDesc{Throughput for MT, periodic-rebuild HuffMHT, and Adaptive Huffman on the 10{,}000-block Ethereum stress test.}
    \label{fig:adaptive_huffman_stress_throughput}
    \end{minipage}
    \hfill
    \begin{minipage}{0.48\columnwidth}
        \centering
    \includegraphics[width=\linewidth]{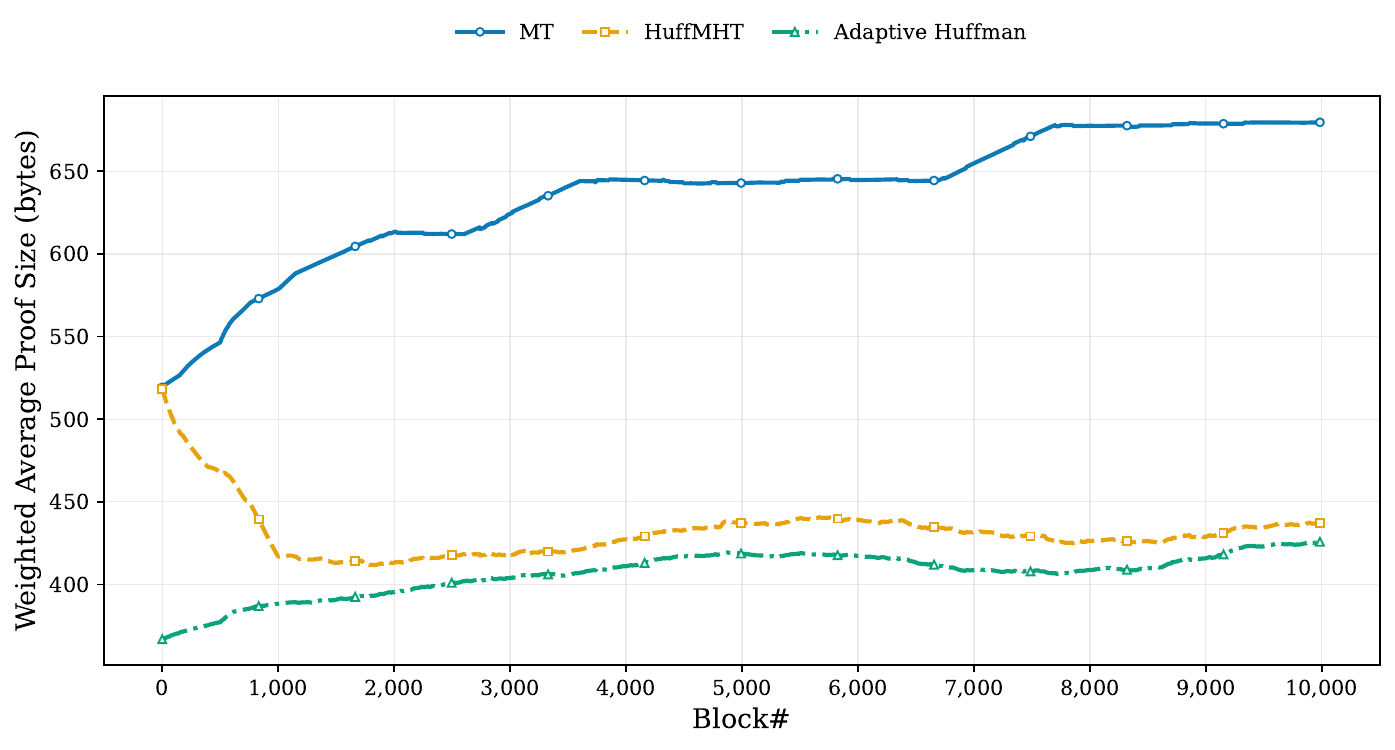}
    \captionDesc{Access-weighted average proof size for MT, periodic-rebuild HuffMHT, and Adaptive Huffman on the 10{,}000-block Ethereum stress test.}
    \label{fig:adaptive_huffman_stress_proof_size}
    \end{minipage}
\end{figure}

\section{Tier-Count Sensitivity}
\label[appendix]{sec:tier-count-sensitivity}

We evaluate additional \dsname configurations with more than two tiers to validate the design choice of two-tier.
For each policy, the multi-tier configuration extends the corresponding two-tier configuration used in \cref{sec:ethereum-replay-evaluation} or \cref{sec:additional-policy-replay}.
All additional hotter tiers are implemented as HuffMHTs.
The Promotion Cache capacities remain $8{,}000$ for the cold tier and $16{,}000$ for each hotter tier, and the 500-block batch/rebuild cadence is unchanged.
The added tiers only introduce additional promotion boundaries.
For these boundaries, we start from the policy's two-tier threshold and double it at each hotter boundary; for example, a six-tier rate-threshold policy uses thresholds $0.05,0.10,0.20,0.40,0.80$, while Absolute-Threshold uses $50{,}000,100{,}000,200{,}000,400{,}000,800{,}000$.
For each block, the replayer first computes the access-weighted average inclusion-proof size over the keys accessed in that block.
The y-axis in \cref{fig:tier_count_proof_size} then reports the mean of these per-block weighted averages over the full replay.

\begin{figure}
    \centering
    \includegraphics[width=0.99\columnwidth]{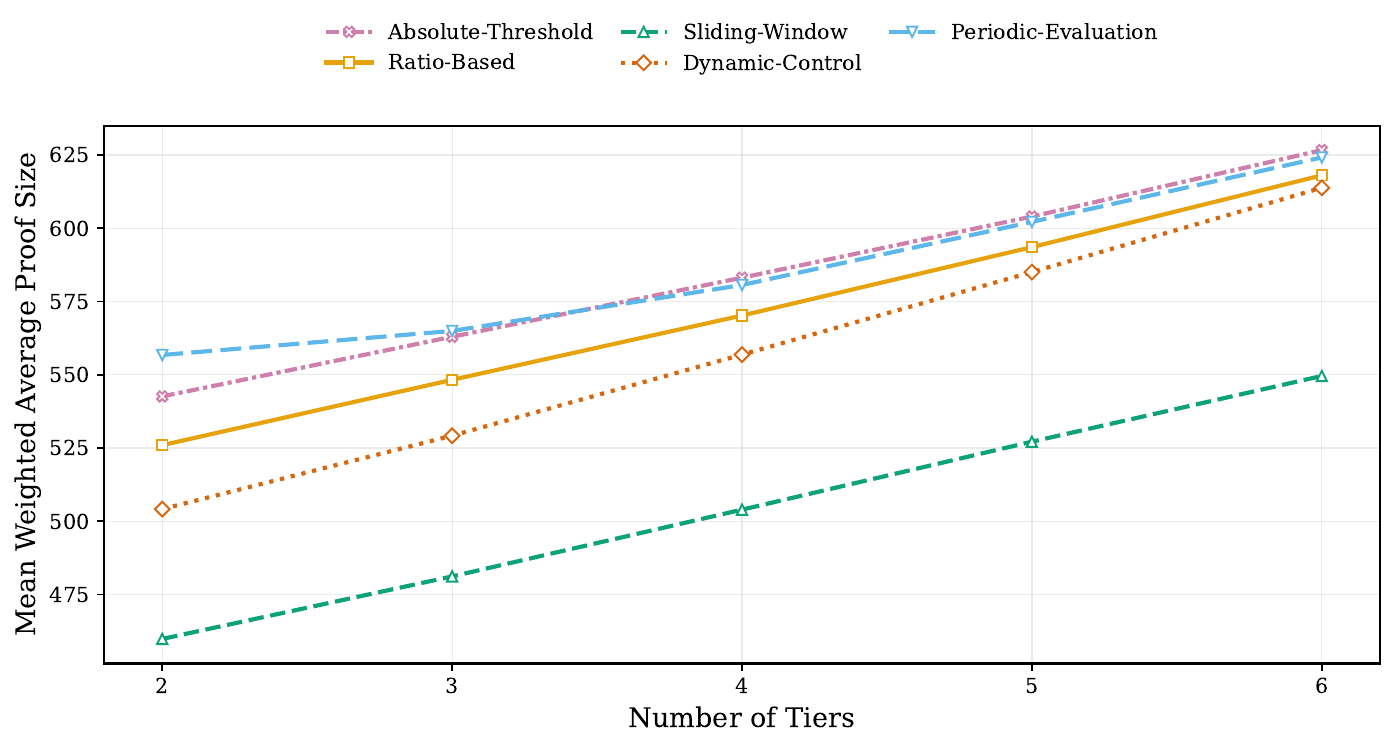}
    \captionDesc{Tier-count sensitivity for \dsname policies in the one-million-block Ethereum replay. Each point reports the full-replay mean of the per-block access-weighted average proof size.}
    \label{fig:tier_count_proof_size}
\end{figure}

\Cref{fig:tier_count_proof_size} shows that increasing tier number does not improve mean proof size.
Although additional tiers can shorten the tier-local proof for the hottest elements, each proof must also include the root of each tier to reconstruct the global root.
The contribution of this cross-tier proof component outweighs the marginal local-proof savings from further splitting the hot set.

\section{Evaluation of Additional Policies}
\label[appendix]{sec:additional-policy-replay}
We now evaluate two additional fixed-policy configurations using the same one-million-block Ethereum data set: Absolute-Threshold and Periodic-Evaluation.
This is in addition to \cref{sec:ethereum-replay-evaluation} which focused on three \dsname policies: Ratio-Based, Sliding-Window, and Dynamic-Control.
The two policies use the two-tier \dsname design with a balanced MT for the cold tier, a periodic-rebuild HuffMHT as hot tier, fixed cache capacities of $8{,}000/16{,}000$ entries, a cache frequency range of 10, and 500-block batches.
Absolute-Threshold promotes an element when its cumulative CMS estimate exceeds $50{,}000$.
Periodic-Evaluation uses the interval-based policy, evaluates promotion candidates every 500 blocks, and promotes an element upon meeting a rate threshold of $0.05$.
\Cref{fig:policy_replay_hash,fig:policy_replay_proof} show that these additional fixed policies preserve the main evaluation trend.
Both Absolute-Threshold and Periodic-Evaluation remain below MPT and UBT in hashed input bytes, while their access-weighted proof sizes stay clustered with the other \dsname variants.

\begin{figure}
    \centering
    \begin{minipage}{0.48\columnwidth}
\centering
    \includegraphics[width=\linewidth]{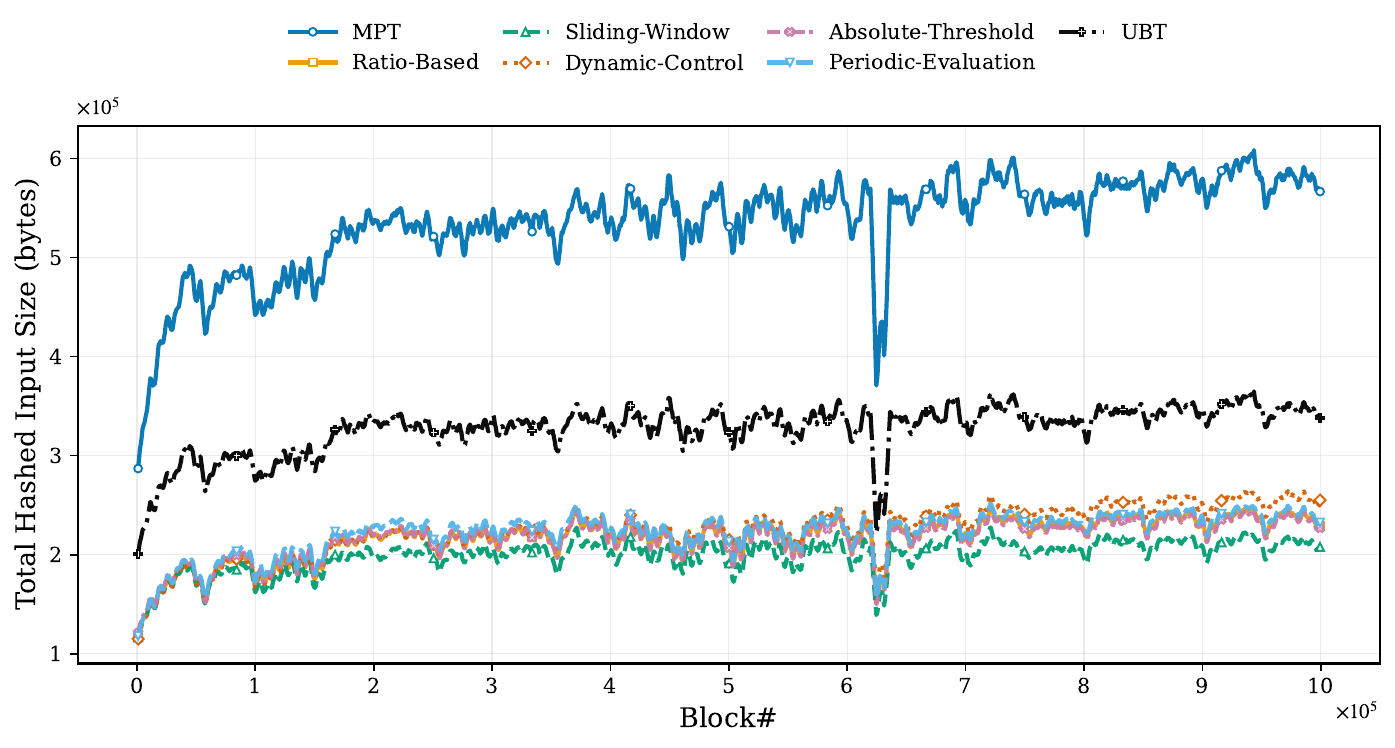}
    \captionDesc{Per-block hashed inputs comparison on one million Ethereum blocks.}
    \label{fig:policy_replay_hash}
    \end{minipage}
    \hfill
    \begin{minipage}{0.48\columnwidth}
    \centering
    \includegraphics[width=\linewidth]{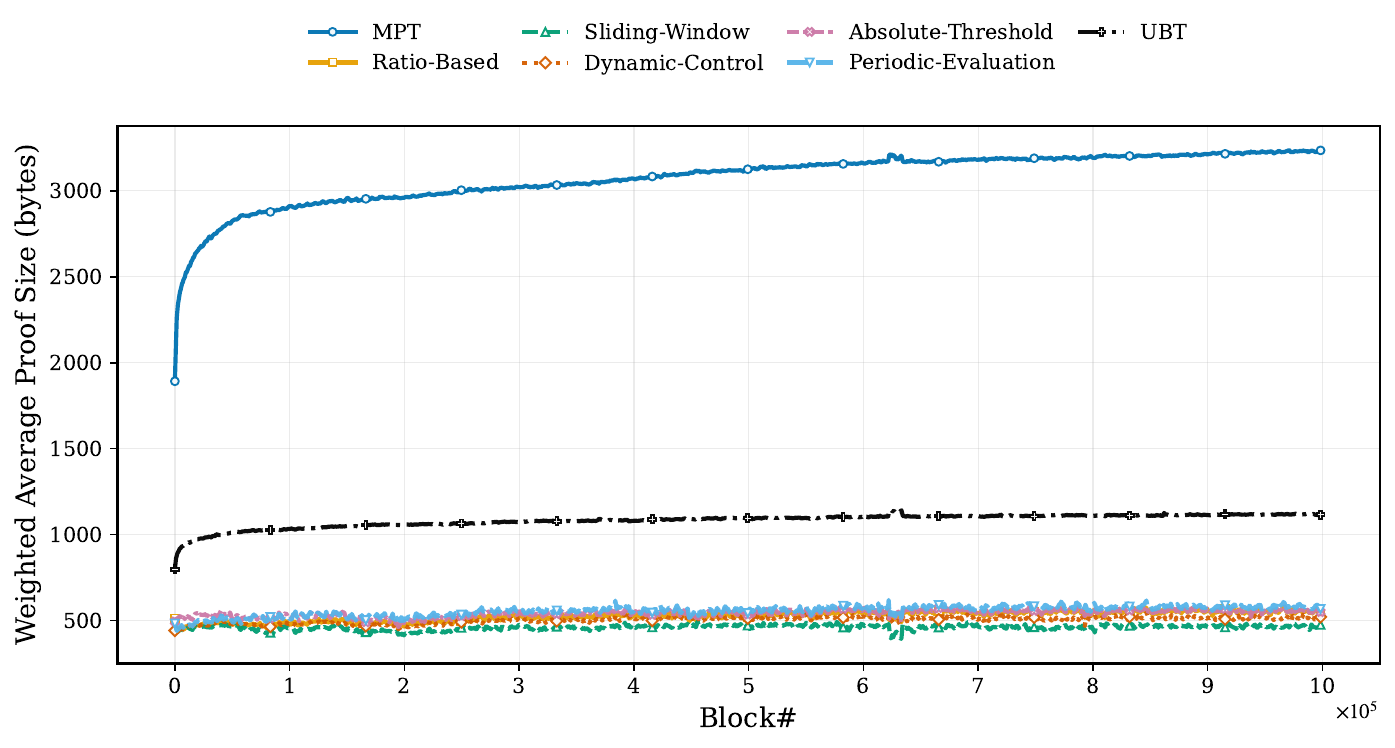}
    \captionDesc{Access-weighted average proof size comparison on one million Ethereum blocks.}
    \label{fig:policy_replay_proof}
    \end{minipage}
\end{figure}

\section{Underlying ADSs Choice}
\label[appendix]{sec:underlying-ads-choice}

We also evaluate the two-tier tree combination that motivates using a HuffMHT for hot elements and an MT for cold and newly inserted elements.
The replay uses the Absolute-Threshold configuration from \cref{sec:additional-policy-replay}, with the same Ethereum starting block as \cref{sec:ethereum-replay-evaluation}.
We compare three two-tier combinations, written as hot tier+cold tier: MT+MT, HuffMHT+MT, and HuffMHT+HuffMHT.

\begin{figure}
    \centering
    \begin{minipage}{0.48\columnwidth}
    \centering
    \includegraphics[width=0.99\columnwidth]{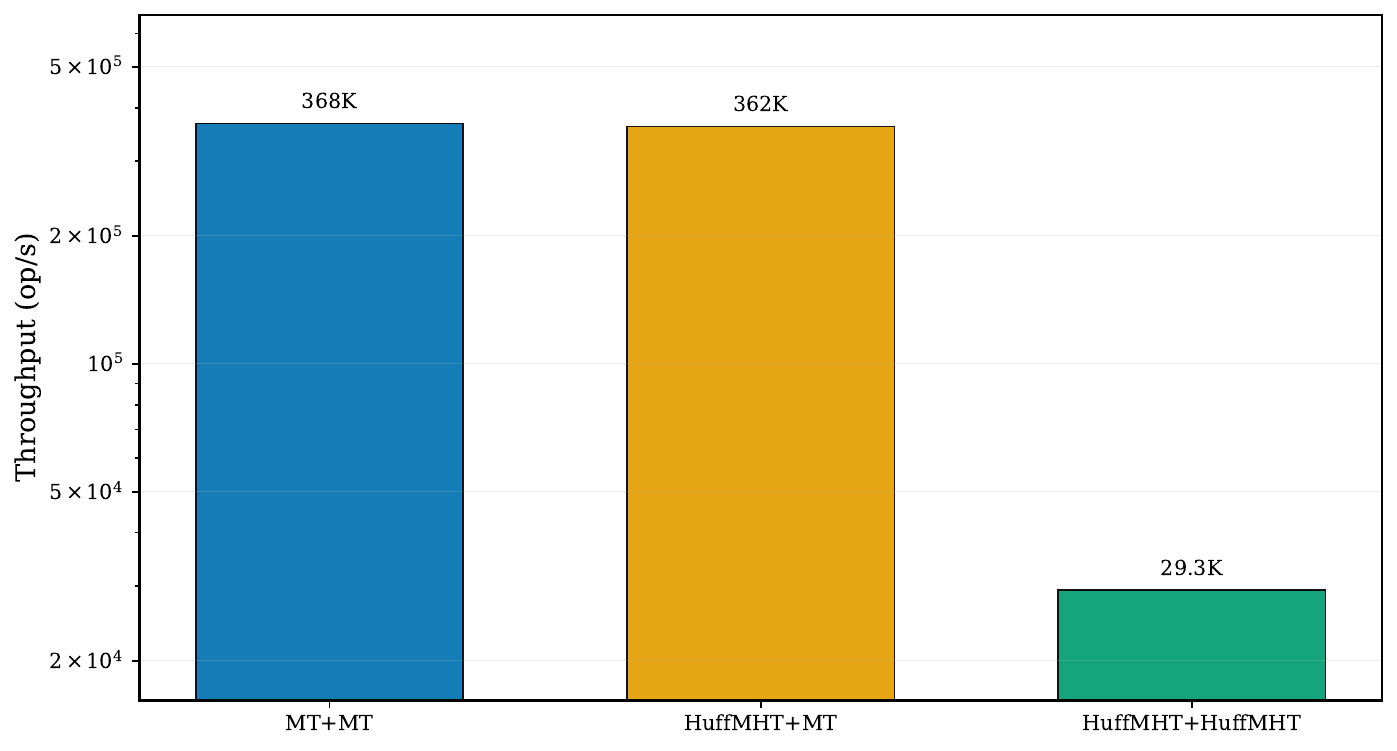}
    \captionDesc{Throughput for tier combinations under the Absolute-Threshold configuration. 
    Each label defines the tiers, e.g., HuffMHT+MT means that the hot tier uses HuffMHT and the cold one uses MT.}
    \label{fig:tree_combination_throughput}
    \end{minipage}
    \hfill
    \begin{minipage}{0.48\columnwidth}
    \centering
    \includegraphics[width=0.99\columnwidth]{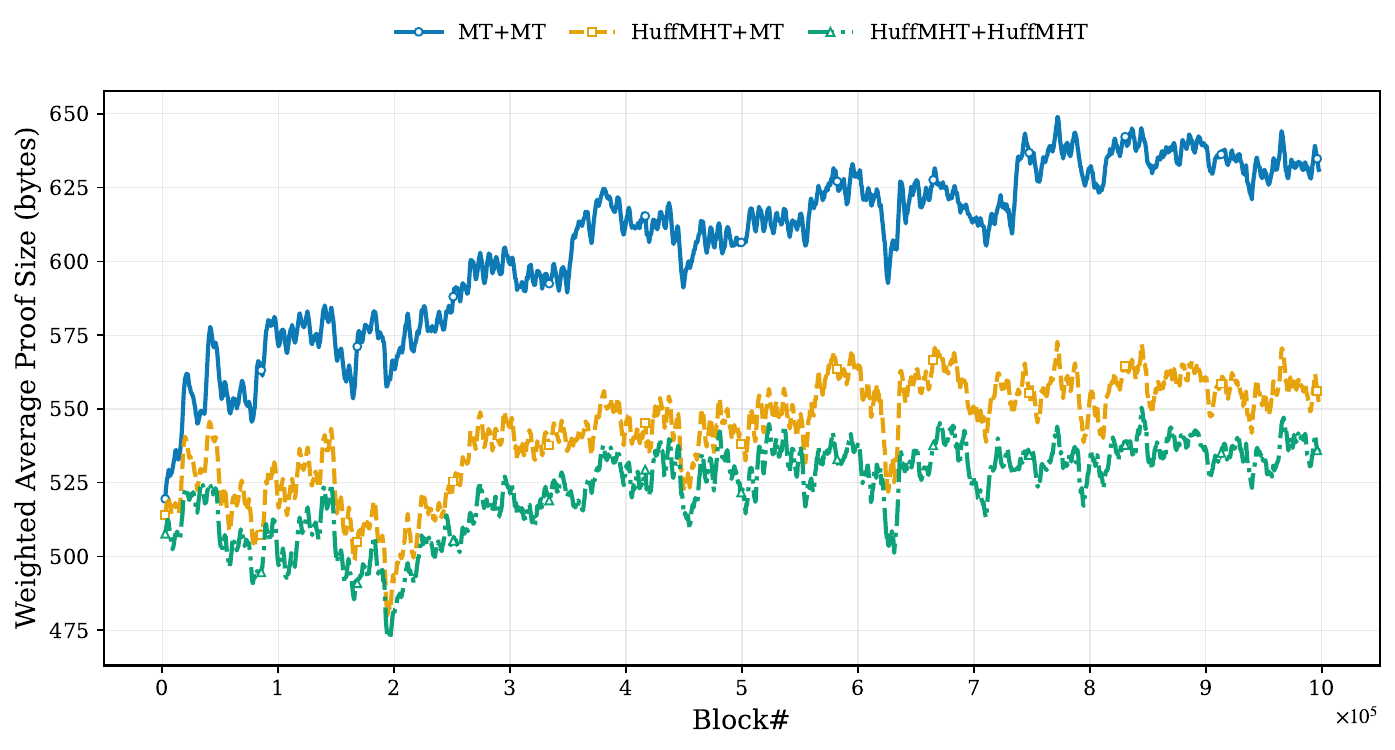}
    \captionDesc{Access-weighted average proof size for tier combinations under the Absolute-Threshold configuration.
Labels denote hot-tier+cold-tier choices; HuffMHT+MT is the \dsname combination.}
    \label{fig:tree_combination_proof_size}
    \end{minipage}
\end{figure}

\Cref{fig:tree_combination_throughput} shows that MT+MT and HuffMHT+MT have comparable throughput, at 368K and 362K operations per second, respectively.
HuffMHT+HuffMHT reaches only 29.3K operations per second, about $12.3\times$ lower than HuffMHT+MT.
\Cref{fig:tree_combination_proof_size} shows the complementary proof-size tradeoff.
MT+MT has the largest weighted proofs because neither tier shortens paths for frequently accessed elements.
HuffMHT+HuffMHT gives the smallest weighted proofs, while HuffMHT+MT remains close to it and below MT+MT.
Thus, MT+MT is efficient but does not reduce proof sizes for hot elements, whereas making both tiers Huffman-shaped substantially reduces throughput.

\section{Weighted Average Proof Size By Read/Write}
\label[appendix]{sec:read-write-proof-size}

The proof-size metric used in \cref{sec:ethereum-replay-evaluation} weights each key by the total number of both write and read accesses in the block.
We now consider refinements of this metric which account for only one of the two: either reads or writes.
We evaluate each refinement and present the results in \cref{fig:proof_weighted_by_read,fig:proof_weighted_by_write}, noting that both show the same performance ordering as the aggregate metric.
MPT has the largest proofs across metrics, rising from roughly 2 kilobytes at the beginning of the replay to above 3 kilobytes later in the run.
UBT lowers the read- and write-weighted averages to about 1 kilobyte, but remains well above the \dsname variants.
All \dsname policies remain in the hundreds-of-bytes range for both reads and writes, with Sliding-Window generally at or near the lowest curve.
The similarity between the two metrics indicates that the proof-size reduction is not primarily a consequence of the trace's read/write composition.
Rather, the reduction persists for both operations because frequently accessed accounts are placed on shorter paths in the hot tier.

\begin{figure}
    \centering
    \begin{minipage}{0.48\columnwidth}
    \centering
    \includegraphics[width=\linewidth]{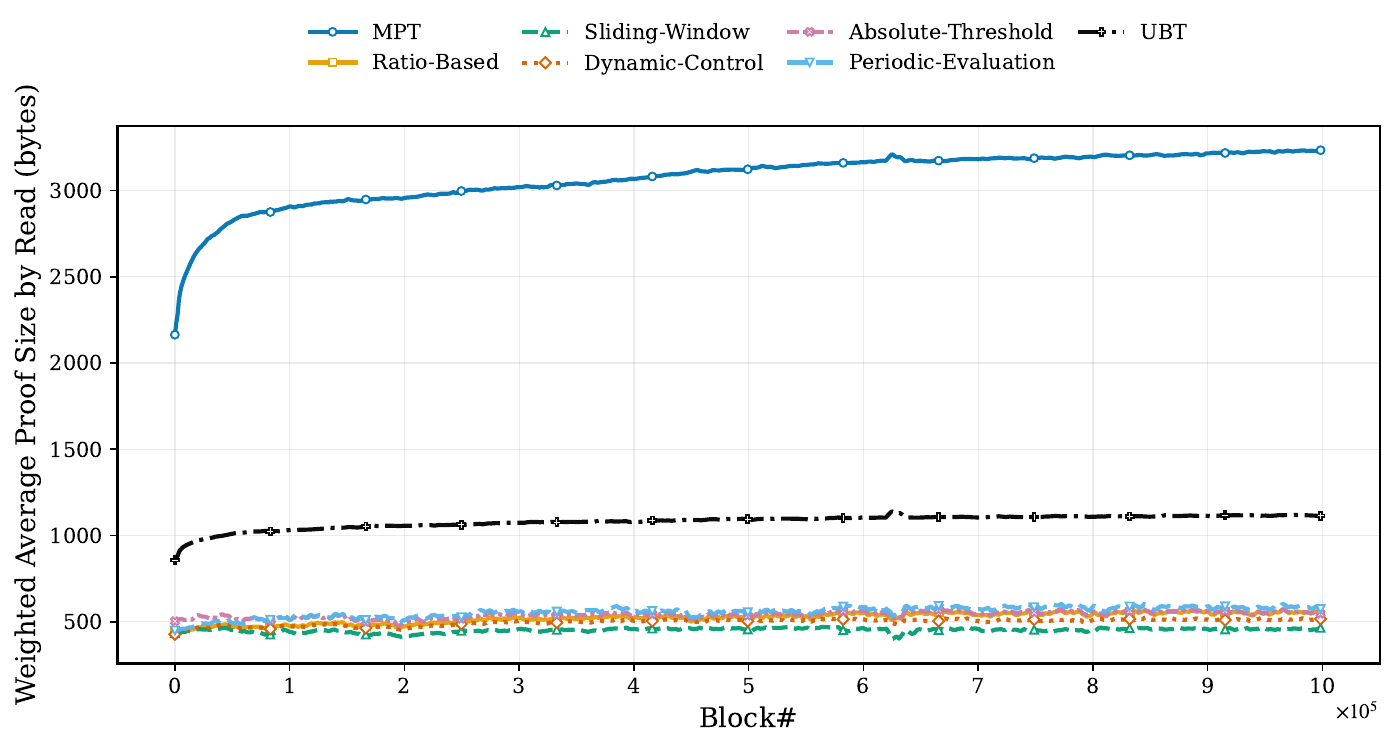}
    \captionDesc{Read-weighted average proof size in the one-million-block Ethereum replay.}
    \label{fig:proof_weighted_by_read}
    \end{minipage}
    \hfill
    \begin{minipage}{0.48\columnwidth}
    \centering
    \includegraphics[width=\linewidth]{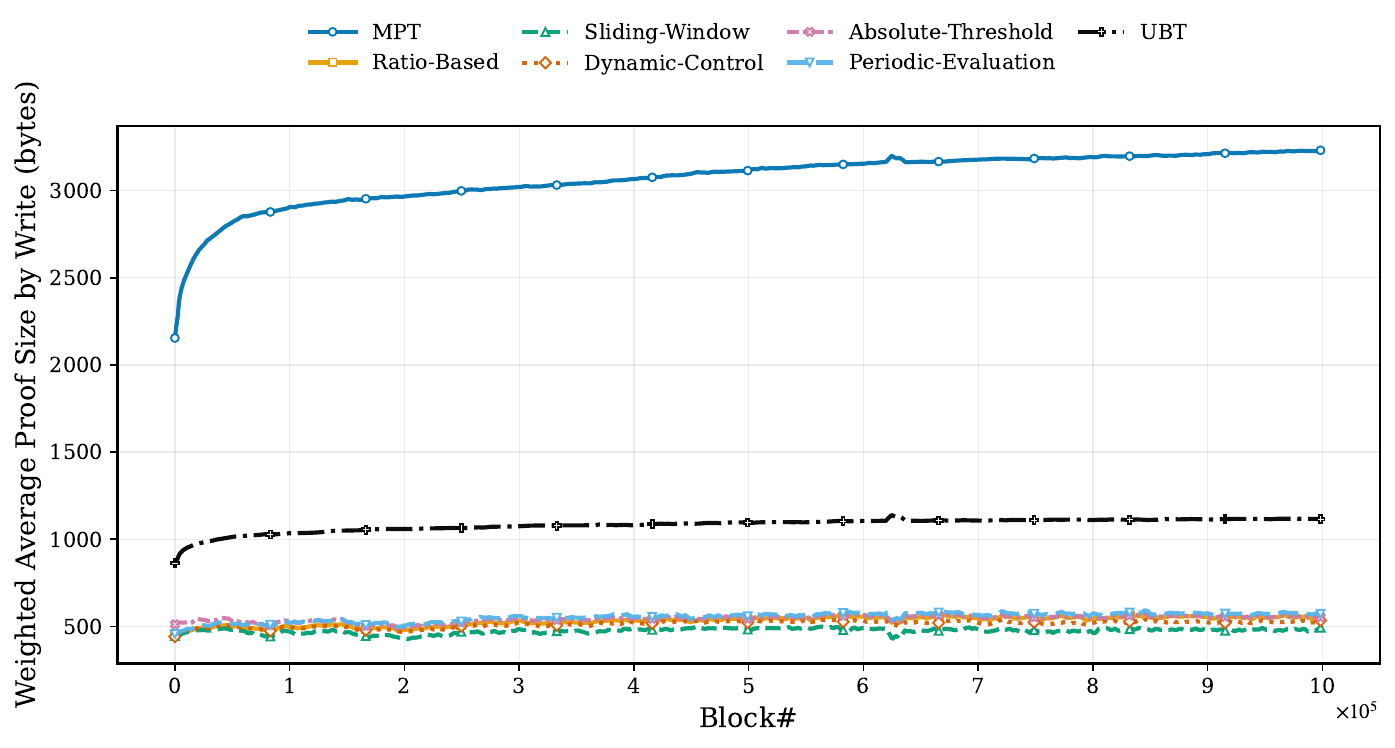}
    \captionDesc{Write-weighted average proof size in the one-million-block Ethereum replay.}
    \label{fig:proof_weighted_by_write}
    \end{minipage}
\end{figure}
\end{document}